\documentclass[11pt,openright]{article}

\topmargin= -1.0 truecm
\textheight=23truecm
\textwidth=14truecm
\oddsidemargin=-0.7 truecm
\evensidemargin=-0.7 truecm

\usepackage{amsthm}
\usepackage{amsfonts}
\usepackage{amssymb}
\usepackage{amsthm}
\usepackage{amsmath}    
\usepackage{hyperref}
\usepackage{bbm}
\usepackage{graphicx}
\usepackage{latexsym}
\usepackage{mathrsfs}
\usepackage{bbm}
\usepackage{slashed}
\usepackage{mathabx}
\usepackage{xcolor}
\usepackage{moodle}
\usepackage[normalem]{ulem} 
%

\usepackage{geometry}
\geometry{a4paper}

\DeclareMathAlphabet\EuFrak{U}{euf}{m}{n}	
\SetMathAlphabet\EuFrak{bold}{U}{euf}{b}{n}	


\newcommand{\cA}{\mathcal{A}}

\newcommand{\cC}{\mathcal{C}}
\newcommand{\cD}{\mathcal{D}}

\newcommand{\cH}{\mathcal{H}}

\newcommand{\cK}{\mathcal{K}}

\newcommand{\cM}{\mathcal{M}}

\newcommand{\cO}{\mathcal{O}}

\newcommand{\cR}{\mathcal{R}}

\newcommand{\cV}{\mathcal{V}}
\newcommand{\cW}{\mathcal{W}}


\newcommand{\cbH}{\boldsymbol{\mathcal{H}}}

\newcommand{\cbK}{\boldsymbol{\mathcal{K}}}





\newcommand{\rC}{\mathrm{C}}

\newcommand{\rO}{\mathrm{O}}






\newcommand{\bR}{\mathbb{R}}
\newcommand{\bS}{\mathbb{S}}



\newcommand{\si}{\sigma}

\newcommand{\eps}{\varepsilon}



\newcommand{\ch}{\mathrm{ch}}



\newcommand{\R}{\mathbb{R}} 






  \theoremstyle{plain}
  \newtheorem{definition}{Definition}[section]
  \newtheorem{theorem}[definition]{Theorem}
  \newtheorem{proposition}[definition]{Proposition}
  \newtheorem{corollary}[definition]{Corollary}
  \newtheorem{lemma}[definition]{Lemma}
  
  \theoremstyle{definition}

\parskip1mm

\title{\Huge{Relative entropy and curved spacetimes} }

\author{{\sc Fabio Ciolli}, {\sc Roberto  Longo},  {\sc Alessio  Ranallo}, {\sc Giuseppe  Ruzzi}\\
Dipartimento di Matematica,
Universit\`a di Roma Tor Vergata\\
Via della Ricerca Scientifica, 1, I-00133 Roma, Italy\footnote{Supported by the ERC Advanced Grant 669240 QUEST ``Quantum Algebraic Structures and Models'', MIUR FARE R16X5RB55W  QUEST-NET and GNAMPA-INdAM. \eject
} }

\date{}
\begin{document}

\maketitle

\begin{abstract}
Given any half-sided modular inclusion of standard subspaces, we show that the  entropy function associated with the decreasing one-parameter family of translated standard subspaces is convex for any given (not necessarily smooth) vector in the underlying Hilbert space. 
In second quantisation, this infers the convexity of the vacuum relative entropy with respect to the translation parameter of the modular tunnel of von Neumann algebras. 
This result allows us to study the QNEC inequality for coherent states in a free Quantum Field Theory on a stationary curved spacetime, given a KMS state. 
To this end, we define wedge regions and appropriate (deformed) subregions. Examples are given by the Schwarzschild spacetime and null  translated subregions with respect to the time translation Killing flow. More generally, we define wedge and strip regions on a globally hyperbolic spacetime, so to have non trivial modular inclusions of von Neumann algebras, and make our analysis in this context. 
\end{abstract}

\newpage


\section{Introduction}
Entropy/energy inequalities lie at the basis of Physics and are currently a subject of great interest in Quantum Field Theory, with motivations coming from diverse sources as black hole thermodynamics, quantum information, conformal field theory. The Quantum Null Energy Condition, QNEC, states that the second derivative of the relative entropy is non-negative
\[
\frac{d^2}{d\lambda^2} S(\lambda) \geq 0
\]  
when one shrinks a wedge region in the null direction ($\lambda$ is a deformation parameter), see \cite{BFLW15}. Here $S$ is the relative entropy of the associated observable von Neumann algebra between the vacuum state and any other normal state, see \cite{Ar}. This a thought-provoking inequality, untrue for general classical or quantum systems, whose ultimate nature is not yet fully understood. Certainly, quantum aspects and relativistic invariance play a role and Operator Algebras provide the right mathematical framework to describe it. 

Now, the classical Null Energy Condition is relevant in General Relativity, so one is naturally led to check the QNEC for Quantum Field Theory on a Curved Spacetime. The purpose of this paper is formulate and provide an analysis of the QNEC in this general context. Indeed we shall give a very general result that can be applied to a variety of situations. 

In the second quantisation Minkowski context, a formula for the vacuum relative entropy of a coherent state has been defined in \cite{L19, CLR} and is expressed in first quantisation. The general formula, recalled here below \eqref{Sintro}, depends on the symplectic structure of the one-particle space, the metric structure plays a role only about the choice of the unitary evolution associated with the Killing flow. This strongly suggests  the analysis in \cite{CLR}  be naturally done within a curved spacetime framework.  

Our basic result concerns inclusions of standard subspaces of a complex Hilbert space $\cH$. Recall that a standard subspace $H$ of $\cH$ is a closed, real linear subspace such that
\[
H \cap i H = \{0\}\, ,\quad \overline{H + iH} = \cH\, .
\]
An inclusion of standard subspaces $K \subset H$ is said to be half-sided modular if
\[
\Delta_H^{-is}K \subset K\, ,\quad s\geq 0\, ,
\]
with $\Delta_H$ the modular operator of $H$, see \cite{L}. Then one has a monotone family of translated standard subspaces $H_s$, $s\in \mathbb R$, such that $\Delta_H^{-it}K = H_{e^{2\pi t}}$. 

Now, given a vector $\phi\in \cH$, the entropy of $\phi$ with respect to the standard subspace $H$ is defined by
\begin{equation}\label{Sintro}
S^H_\phi = - \Im(\phi , P_H i \log\Delta_H \phi)
\end{equation}
(as quadratic form), with $P_H$ the cutting projection $H+H'\to H$ \cite{CLR}, see Sect. \ref{Aa}. 
Here $H' \equiv i H^{\perp}$ is the symplectic complement of $H$, where $H^{\perp}$ is the orthogonal of $H$ w.r.t. to the real part of the scalar product. 
$S^H_\phi$ is finite for a dense linear subspace of $\cH$ (for the moment we assume $H$ to be factorial, i.e.\ $H\cap H' = \{0\}$).

If $K\subset H$ is half-sided modular, we may consider the function 
\[
S^{\cbH}_\phi: \lambda\mapsto   S^{H_\lambda}_\phi \, , 
\]
with $\cbH$ the triple $(\cH, H,K)$. 
Due to general properties of the entropy, $S^{\cbH}_\phi(\lambda) \geq 0$ or $S^{\cbH}_\phi(\lambda) = +\infty$ and 
$S^{\cbH}_\phi(\lambda) $ is non decreasing. 

Our main abstract result is that $S^{\cbH}_\phi(\lambda)$ is a convex function for every $\phi\in \cH$. Indeed, if for $\lambda_0\in\mathbb R$ we have $S^{\cbH}_\phi(\lambda_0) < \infty$, then 
$S^{\cbH}_\phi(\lambda)$ is finite $C^1$ on $[\lambda_0, \infty)$ and, on this interval,
 $\frac{d}{d\lambda}S^{\cbH}_\phi$ is absolutely continuous with 
$\frac{d^2}{d\lambda^2}S^{\cbH}_\phi\geq 0$  almost everywhere. 

The key point about the entropy $S^H_\phi$ of a vector $\phi$ with respect to the standard subspace $H$ is that it gives Araki's relative entropy
\[
S^H_\phi = S^{\cR_\varphi(H)}(\varphi_\phi |\!| \varphi)
\]
between the vacuum state $\varphi$ and the coherent state $\varphi_\phi$ associated with $\phi\in\cH$, on the von Neumann algebra $\cR_\varphi(H)$, on the Bose Fock space, generated by the Weyl unitaries associated with vectors in $H$.

It is now clear how the result immediately gives the QNEC inequality for coherent vectors in a free Quantum Field Theory in a large class of contexts: it suffices to have a globally hyperbolic spacetime, a subregion mapped into itself by a timelike Killing flow in the positive direction
and a KMS state for the observable operator algebra.
We shall illustrate this geometric operator algebraic structure in the last section, with various examples. We introduce the notions of wedge and strip regions of a globally hyperbolic spacetime and make an analysis in this general framework. 

\section{Entropy and modular inclusions}
In this section we make an abstract analysis that will later be applied in the context of Quantum Field Theory on a curved spacetime. 

\subsection{Entropy and standard subspaces}
\label{Aa}
We recall a few facts about standard subspaces and the notion of entropy of a vector with respect to a standard 
subspace. References for this section are \cite{L,CLR}. 

Given a complex Hilbert space $\cH$, a closed real linear subspace $H$ is said to be \emph{standard} if $H\cap iH=\{0\}$ and $\overline{H+iH}=\cH$. 
To  a standard subspace $H$ one associates an involutive, closed, anti-linear operator, the \emph{Tomita operator} $S_H$ 
\[
S_H(\phi+i\psi) = \phi-i\psi\, ,\ \phi,\psi\in H\, ,
\] whose polar decomposition $S_H = J_H \Delta_H$ gives 
an antiunitary  operator $J_H$, \emph{the modular conjugation}, and a self-adjoint, positive, nonsingular 
operator $\Delta_H$, \emph{the modular operator},  satisfying the relations
\[
  J_H=J^*_H \ , \ \ J_H\Delta_H J_H= \Delta^{-1}_H\ .
\]
The \emph{modular unitary group} is the one-parameter unitary group $\Delta^{is}_H$ that verifies 
\[
\Delta^{is}_H H= H  \ , \ \ J_H H = H'   \ , 
\] 
where $H'$ denotes the \emph{symplectic complement} of $H$ given by 
\[
H' = \{\psi\in\cH: \Im(\psi,\phi)=0,\, \forall  \phi\in H\}\ .
\]
We have: 
\begin{itemize}
\item If $U$ is a unitary operator of $\cH$ then $K=UH$ is  a standard space and 
\[
\Delta_K=U\Delta_HU^* \ \ , \ \ J_K= UJ_HU^*\, .
\]
\item If $H,K$ are standard subspaces of the Hilbert spaces $\cH$ and $\cK$, then $H\oplus K$ is a standard space 
of $\cH\oplus \cK$ and  
\[
J_{H\oplus K}= J_H\oplus J_K \ \ , \ \  \Delta_{H\oplus K}= \Delta_H \oplus \Delta_K\, .
\]
\item $H\cap H'$ is equal to ker$(1-\Delta_H)$.
\end{itemize} 
 We say that $H$ is \emph{factorial} if
 $H\cap H' = \{0\}$. 
 Given a factorial standard  subspace $H$ of $\cH$, the \emph{cutting projection} associated with $H$  is defined as
\[
P_H(\phi+\phi')= \phi \ , \qquad \phi\in H, \ \phi'\in H' \, .
\]
It turns out that $P_H$ is a densely defined, closed, real linear operator satisfying  
\[
P^2_H = P_H \  \ , \ \ -iP_H i=P_{iH} \ \ , \ \  P_H\Delta^{it}_H=
\Delta^{it}_H P_H \ .
\]
In general, if $H$ is not factorial, we have a direct sum decomposition
\begin{equation}\label{Sf}
H = H_a \oplus H_f\, ,
\end{equation}
where $H_a = H\cap H'$ and $H_f$ is the real orthogonal of $H_a$. So $ \cH = \cH_a \oplus \cH_f$ where 
$H_a\subset \cH_a$ is an Abelian standard subspace and $H_f\subset\cH_f$ is a factorial standard subspace of $\cH_f$ (see also \cite{BCD}).

If $H$ is Abelian, following \cite{BCD}, we recall that the entropy of $\phi$ with respect to $H$ is
\begin{equation}\label{Eab}
S^{H}_\phi = 2(\phi, (1 -E)\phi) \, ,
\end{equation}
with $E$ the real orthogonal projection of $\cH = H + i H$ onto $H$. 

In general, if $H\subset \cH$ is any standard subspace, we consider the factorial decomposition $H_a \oplus H_f\subset \cH_a\oplus \cH_f$ \eqref{Sf}
and define the entropy of $\phi$ with respect to $H$ as
\begin{equation}\label{Entrgen}
S^{H}_\phi \equiv S^{H_a}_{\phi_a} + S^{H_f}_{\phi_f}\, ,
\end{equation}
where $\phi = \phi_a \oplus\phi_f$ is the decomposition of $\phi$,
$S^{H_a}_{\phi_a}$ is the relative entropy of $\phi_a\in \cH_a$ with respect $H_a$  and similarly for $S^{H_f}_{\phi_f}$.

If $H\subset \cH$ is any closed, real linear subspace, we set $S^H_\phi = S^{H_s}_{\phi_s}$, where $H_s$ is the standard component of $H$ in $\cH_s =(H\cap iH)^{\perp_\mathbb R}\cap \overline{H+iH}$. 
We thus may, and will, assume the considered subspaces to be standard and factorial. 

Some of the main properties of the entropy of a vector are:
\begin{itemize}
\item $S^H_\phi\geq 0$ or $S^H_\phi = +\infty$ (\emph{positivity}); 
\item If $K \subset H$, then $S^K_\phi\leq S^H_\phi$ (\emph{monotonicity});
\item If $\phi_n\to \phi$, then $S^H_\phi \leq \liminf_n S^H_{\phi_n}$ (\emph{lower semicontinuity});
\item If $H_n\subset H$ is an increasing sequence with $\overline{\bigcup_n H_n} = H$, then $S^{H_n}_\phi \to S^H_{\phi}$
(\emph{monotone continuity}).
\end{itemize}
It turns out that \cite{CLR}
\begin{equation}\label{Sfin}
S^H_\phi <+\infty \iff -\int^1_0 \log\lambda\, d(\phi,E(\lambda)\phi)<+\infty \iff  
\phi\in D(\sqrt{|\log\Delta_H|E_-})\, .
\end{equation}
Here $E(\lambda)$ are the spectral projections of $\Delta_H$ and $E_-$ is the spectral projection onto the negative part of the spectrum of $\log\Delta_H$. 

Now, let $H$ be a standard subspace of a  Hilbert space. Denote by $\cR_\varphi(H)$ the von Neumann algebra 
generated by the Weyl unitaries associated with vectors in $H$ on 
the second quantisation Bose-Fock space over $\cH$, and $\varphi$ is the vacuum state  (see Section \ref{QF1}).  
\begin{proposition}
\label{equiv-entropy}
For any standard subspace $H$ of a Hilbert space $\cH$ and $\phi\in\cH$,  $S^H_\phi $  equals the Araki's relative entropy 
\[
S^H_\phi = S^{\cR_\varphi(H)}(\varphi_\phi |\!| \varphi)
\]
between the restriction of the vacuum state $\varphi$ and of the coherent state $\varphi_\phi$ associated with $\phi\in\cH$ to the von Neumann algebra $\cR_\varphi(H)$.
\end{proposition}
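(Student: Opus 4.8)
The plan is to read both sides of the claimed identity off the one-particle data and to organise the proof along the factorial decomposition \eqref{Sf}. The decisive input is that the Bose--Fock functor turns the orthogonal splitting $\cH=\cH_a\oplus\cH_f$, $H=H_a\oplus H_f$, into a tensor product $\Gamma(\cH)=\Gamma(\cH_a)\otimes\Gamma(\cH_f)$, under which $\cR_\varphi(H)=\cR_{\varphi_a}(H_a)\,\overline{\otimes}\,\cR_{\varphi_f}(H_f)$, the vacuum factorises as $\varphi=\varphi_a\otimes\varphi_f$, and the coherent state of $\phi=\phi_a\oplus\phi_f$ factorises as $\varphi_\phi=\varphi_{\phi_a}\otimes\varphi_{\phi_f}$. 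Since Araki's relative entropy is additive under tensor products of states,
\[
S^{\cR_\varphi(H)}(\varphi_\phi\|\varphi)=S^{\cR_{\varphi_a}(H_a)}(\varphi_{\phi_a}\|\varphi_a)+S^{\cR_{\varphi_f}(H_f)}(\varphi_{\phi_f}\|\varphi_f),
\]
which matches the additive definition \eqref{Entrgen}. It therefore suffices to establish the equality separately in the abelian and in the factorial cases.

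In the abelian case $\cR_\varphi(H)$ is commutative, so Araki's relative entropy collapses to the classical Kullback--Leibler divergence between the two Gaussian states describing, respectively, the vacuum and the coherent excitation. Diagonalising through the real projection $E$ of $\overline{H+iH}$ onto $H$, a direct Gaussian computation returns the quantity $2(\phi,(1-E)\phi)$ of \eqref{Eab}, which is by definition $S^H_\phi$ in this case.

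The substance of the proposition is the factorial case, which I would treat through the relative modular operator of a coherent vector. Take $\Omega$ to be the Fock vacuum, cyclic and separating for $\cR_\varphi(H)$, and $\eta=W(\phi)\Omega$ a (cyclic, separating) vector representative of $\varphi_\phi$, where $W(\cdot)$ denotes the Fock Weyl unitaries. The vacuum modular data are the second quantisation of the one-particle data, $\Delta_\varphi=\Gamma(\Delta_H)$ and hence $\log\Delta_\varphi=d\Gamma(\log\Delta_H)$ (with $J_\varphi=\Gamma(J_H)$). Using $\varphi_\phi=\varphi\circ\mathrm{Ad}\,W(\phi)^*$ together with $\Delta_\varphi^{it}W(\phi)\Delta_\varphi^{-it}=W(\Delta_H^{it}\phi)$ and the Weyl relations, one checks that the Connes cocycle $[D\varphi_\phi:D\varphi]_t$ is again a Weyl unitary, built from the one-particle vector $(\Delta_H^{it}-1)\phi$ up to a scalar phase. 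Differentiating at $t=0$ yields the relative modular Hamiltonian as $\log\Delta_\varphi$ plus a term linear in the field in the direction $\log\Delta_H\,\phi$; inserting this into Araki's formula $S^{\cR_\varphi(H)}(\varphi_\phi\|\varphi)=-(\eta,\log\Delta_{\varphi,\varphi_\phi}\,\eta)$ and letting the resulting Gaussian Fock expectation collapse to the one-particle space leaves precisely the quadratic form $\Im(\phi,P_H i\log\Delta_H\phi)$ of \eqref{Sintro}; the cutting projection $P_H$ emerges upon recombining the $d\Gamma(\log\Delta_H)$ term with the linear field term. This reproduces the computations of \cite{L19,CLR}.

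The main obstacle is to make the cocycle computation rigorous for an arbitrary, not necessarily smooth, vector $\phi$: the manipulations above are transparent only when $\phi$ lies in the domain of $P_H\log\Delta_H$ (equivalently, among the analytic vectors of the modular flow), whereas in general $P_H$ and $\log\Delta_H$ are merely densely defined and unbounded. I would handle this through the finiteness criterion \eqref{Sfin} together with the lower semicontinuity and monotone continuity of the entropy recalled above: approximating $\phi$, or approximating $H$ by an increasing family $H_n$ with $\overline{\bigcup_n H_n}=H$, within the regime where both members are finite and already known to agree, and passing to the limit on both sides. When $S^H_\phi=+\infty$, the same criterion \eqref{Sfin} shows that $\eta$ fails to lie in the domain entering Araki's formula, so the relative entropy is $+\infty$ as well and the identity holds trivially.
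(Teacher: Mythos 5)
Your proof follows essentially the same route as the paper: reduce to the abelian and factorial cases via the decomposition \eqref{Sf}, using that second quantisation turns the direct sum into a tensor product and that Araki's relative entropy is additive under tensor products. The paper simply cites \cite{BCD} for the abelian case and \cite[Thm.~4.5]{CLR} for the factorial case, whereas you additionally sketch those arguments; your sketches are consistent with what those references do, so nothing essential is missing.
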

\begin{proof}
This proposition is \cite[Thm. 4.5]{CLR} in the factorial case. In the Abelian case the proposition is proved in \cite{BCD}. 
It then holds in general by considering the factorial decomposition \eqref{Sf}. Indeed, 
\[
\varphi_\phi = \varphi_{\phi_a\oplus \phi_f} = 
\varphi_{\phi_a}
\otimes \varphi_{\phi_f} 
\]
entails our statements by the additivity of the relative entropy under tensor product. 
\end{proof}

\subsection{Half-sided modular inclusions}
\label{Ab}

We now discuss the properties of the relative entropy of a fixed vector with respect to the one-parameter family of standard subspaces associated with a given half-sided modular inclusion. In particular, we analyse the convexity property. 

Let $\cbH=(\cH, H_0,H_1)$ be an \emph{half-sided modular inclusion} i.e.\, $\cH$ is a complex Hilbert space,
$H_1\subset H_0$ is an inclusion of standard subspaces of $\cH$ and we have
\[
\Delta^{-is}_{H_0}H_1 \subset H_1 \ , \qquad s \geq 0\, .
\] 
By \cite{AZ05, L},  the operator 
$\frac1{2\pi}(\log\Delta_{H_1}-\log\Delta_{H_0})$ 
is  essentially self-adjoint  on $D(\log\Delta_{H_1})\cap D(\log\Delta_{H_0})$ with  positive  closure that we denote by 
$X$. The one-parameter unitary group generated by $X$, 
\emph{the translation unitary group} $U(t)=\exp(itX)$, satisfies the relations 

$a)$ $\Delta^{-is}_{H_0} U(t)\Delta^{is}_{H_0}=U(e^{2\pi s}t)$ and  
$J_{H_0} U(t) J_{H_0}=U(-t)$,  $t\in\bR$; 

$b)$ $U(t) H_0\subset H_0$ for  $t\geq 0$; 

$c)$ $U(1) H_0= H_1$. 

\noindent
In particular, last relation  gives 
\begin{equation}
\label{fv}
\Delta^{is}_{H_1}= U(1)\Delta^{is}_{H_0}U(-1) \ \ , \ \ J_{H_1}= U(1) J_{H_0}U(-1) \ . 
\end{equation} 
 Let $G_0$ be the ``$ax + b$'' group, namely the group of diffeomorphisms of $\mathbb R$ generated by translations and dilations, and let $G$ be the improper ``$ax + b$'' group, namely the group generated by $G_0$ and the reflection $x \mapsto -x$ on $\mathbb R$.  
Relation $a)$ says that $\cbH$ gives a positive energy, anti-unitary representation of $G$. The converse is true \cite{BGL02}, namely every positive energy unitary representation of $G$ arises in this way. So,
there exists a one-to-one correspondence
\[
\text{Half-sided modular inclusion} \longleftrightarrow  \text{Positive energy anti-unitary representation of $G$}
\]
This correspondence concerns  representations, not merely unitary equivalence classes. 
Representations of $G$ with strictly positive energy correspond to factorial half-sided modular inclusions, trivial representations of $G$ to trivial inclusions, i.e.\ $H_0 = H_1$.  Indeed, the following holds:%
\begin{proposition}
\label{factorial}
Let $\cbH=(\cH, H_0,H_1)$ be a  half-sided modular inclusion and $H_0 = H_{k,a} \oplus H_{k,f}\subset \cH_{k,a} \oplus \cH_{k,f}$ the Abelian/factorial decomposition of $H_k$, $k =0,1$. Then $\cH_{0,a} = \cH_{1,a}$,  $\cH_{0,f} = \cH_{1,f}$ and  
\[
H_a \equiv H_{0,a} = H_{1,a}\, .
\]
$H_a$ is the fixed-point subspace for the associated unitary representation of $G$. 
\end{proposition}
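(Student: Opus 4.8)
The plan is to use the translation group $U(s)=\exp(isX)$ together with the conjugation formulas \eqref{fv} to transport the abelian data of $H_1$ onto that of $H_0$. Recall that for each $k$ the abelian part satisfies $\cH_{k,a}=\ker(1-\Delta_{H_k})=\ker\log\Delta_{H_k}$, the reducing subspace on which the modular operator is trivial. The first formula in \eqref{fv} gives $\log\Delta_{H_1}=U(1)(\log\Delta_{H_0})U(1)^*$, hence $\cH_{1,a}=U(1)\,\cH_{0,a}$. Thus everything reduces to showing that $U(1)$ fixes $\cH_{0,a}$ pointwise, and for this it is enough to prove that $X$ vanishes on $\cH_{0,a}=\ker\log\Delta_{H_0}$.

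The vanishing of $X$ on $\cH_{0,a}$ is the heart of the matter. Formally, differentiating relation $a)$ yields $[\log\Delta_{H_0},X]=2\pi i X$, so $X$ would carry $\ker\log\Delta_{H_0}$ into the eigenspace of the self-adjoint operator $\log\Delta_{H_0}$ for the non-real eigenvalue $2\pi i$, forcing $X\xi=0$; the point is to make this rigorous with the spectral calculus. Writing $E_X$ for the spectral measure of $X\ge 0$, relation $a)$ integrates to $\Delta_{H_0}^{-it}E_X(B)\Delta_{H_0}^{it}=E_X(e^{-2\pi t}B)$ for Borel $B\subset[0,\infty)$. For $\xi\in\cH_{0,a}$ one has $\Delta_{H_0}^{it}\xi=\xi$, so $\|E_X((a,\infty))\xi\|=\|E_X((e^{-2\pi t}a,\infty))\xi\|$ for every $t$ and every $a>0$; letting $t\to-\infty$ and using $X\ge 0$ forces $E_X((a,\infty))\xi=0$ for all $a>0$, i.e. $X\xi=0$. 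Hence $U(s)\xi=\xi$ for all $s$, so $U(1)\cH_{0,a}=\cH_{0,a}$ and $\cH_{0,a}=\cH_{1,a}=:\cH_a$; passing to orthogonal complements gives $\cH_{0,f}=\cH_{1,f}$.

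It remains to compare the abelian subspaces and to identify the fixed points. On $\cH_a$ both modular operators reduce to the identity, so $H_{k,a}=\{\phi\in\cH_a:J_{H_k}\phi=\phi\}$. Since $J_{H_0}\Delta_{H_0}J_{H_0}=\Delta_{H_0}^{-1}$, the conjugation $J_{H_0}$ preserves $\cH_a$, and because $U(\pm 1)$ fixes $\cH_a$ pointwise the second formula in \eqref{fv} gives $J_{H_1}|_{\cH_a}=J_{H_0}|_{\cH_a}$; therefore $H_{0,a}=H_{1,a}=:H_a$. Finally, the representation of $G$ is generated by the dilations $\Delta_{H_0}^{is}$, the translations $U(s)$ and the anti-unitary reflection $J_{H_0}$, so a vector is $G$-fixed if and only if it is fixed by each. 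Fixity under dilations selects $\cH_a$; on $\cH_a$ we have just shown $X=0$, so the translations act trivially and impose no further condition; the condition $J_{H_0}\phi=\phi$ then cuts out exactly $H_a$ (a real subspace, consistently with $J_{H_0}$ being anti-unitary), identifying $H_a$ with the $G$-fixed-point subspace.

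The only genuine obstacle is the unbounded-operator bookkeeping in the second step: the naive commutator computation $[\log\Delta_{H_0},X]=2\pi i X$ is purely formal, and it is the spectral covariance of $X$ under $\Delta_{H_0}^{it}$ together with the positivity $X\ge 0$ that makes the conclusion $X|_{\cH_a}=0$ rigorous. Everything else is then a routine consequence of the conjugation relations \eqref{fv}.
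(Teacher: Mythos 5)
Your argument is correct and follows essentially the same route as the paper: both rest on the key fact that vectors fixed by the dilations $\Delta_{H_0}^{is}$ are automatically fixed by the translations $U(s)$, and then transport the abelian data between $H_0$ and $H_1$ via \eqref{fv}. The only difference is that the paper simply cites \cite[Prop.~B.3]{GL96} for this fact, whereas you prove it directly from the spectral covariance $\Delta_{H_0}^{-it}E_X(B)\Delta_{H_0}^{it}=E_X(e^{-2\pi t}B)$ together with the positivity of $X$ --- a correct, self-contained substitute for the citation.
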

\begin{proof}
Let $H_{a}$ be the $\Delta^{is}_{H_0}$ fixed-point subspace, $s\in \mathbb R$, thus $H_a \equiv H_{0,a}$. 
Then $H_a$ is pointwise left fixed by $U$ \cite[Prop. B.3]{GL96}, with $U$ the associated, positive energy, representation of $G$. By eq. \eqref{fv}, $H_{a}$ is then also left pointwise fixed by $\Delta^{it}_{H_1}$, so $H_{0,a} \subset H_{1,a}$. By repeating the argument exchanging $H_{0,a} $ and $ H_{1,a}$, we have $H_{1,a} \subset H_{0,a}$, hence the thesis because $U$ has no non-zero fixed vector on $H_{0,f}$ as this is a factorial standard space.
\end{proof}
\begin{lemma}
Let $\cbH=(\cH, H_0,H_1)$ and $\cbK=(\cK, K_0,K_1)$ be half-sided modular inclusions and $T:\cH \to\cK$ be a bounded, complex linear operator. The following are equivalent

$(i)$ $T\Delta^{is}_{H_0}= \Delta^{is}_{K_0}T$, \ $T\Delta^{is}_{H_1}= \Delta^{is}_{K_1}T$\, ,\ $TJ_{H_0} = J_{K_0}T$,\,
\,  $t\in\mathbb R$. 

$(ii)$ $T$ intertwines the anti-unitary representations of $G$ associated with $\cbH$ and $\cbK$. 
\end{lemma}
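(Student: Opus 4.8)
The plan is to reduce both statements to the generators of the two one-parameter unitary groups attached to each inclusion, exploiting the fact, recalled above, that the $G$-representation associated with $\cbH$ is generated by the modular group $\Delta^{is}_{H_0}$ (dilations), the translation group $U_{\cbH}(t)=\exp(itX_{\cbH})$, and the modular conjugation $J_{H_0}$ (reflection), and symmetrically for $\cbK$. Thus $T$ intertwines the two $G$-representations if and only if it intertwines these three families, and the whole point is to show that the three relations listed in $(i)$ encode exactly this.

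The implication $(ii)\Rightarrow(i)$ is immediate. Restricting an intertwiner of the full representations to the generating one-parameter subgroups and the reflection gives $T\Delta^{is}_{H_0}=\Delta^{is}_{K_0}T$, $T U_{\cbH}(t)=U_{\cbK}(t)T$ and $TJ_{H_0}=J_{K_0}T$. The remaining relation of $(i)$ then follows from \eqref{fv}: since $\Delta^{is}_{H_1}=U_{\cbH}(1)\Delta^{is}_{H_0}U_{\cbH}(-1)$ and likewise for $\cbK$, intertwining of dilations and translations forces $T\Delta^{is}_{H_1}=\Delta^{is}_{K_1}T$.

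The substance is in $(i)\Rightarrow(ii)$, where from intertwining of the two modular groups I must manufacture intertwining of the translations. First, differentiating $T\Delta^{is}_{H_k}=\Delta^{is}_{K_k}T$ at $s=0$ is legitimate because $T$ is bounded: for $\xi\in D(\log\Delta_{H_k})$ the derivative of the left-hand side exists, hence so does that of the right, placing $T\xi$ in $D(\log\Delta_{K_k})$ and yielding $T\log\Delta_{H_k}\subset\log\Delta_{K_k}\,T$ for $k=0,1$. In particular $T$ maps the common domain $\mathcal D=D(\log\Delta_{H_0})\cap D(\log\Delta_{H_1})$ into $D(\log\Delta_{K_0})\cap D(\log\Delta_{K_1})$ and, on $\mathcal D$, intertwines $\frac1{2\pi}(\log\Delta_{H_1}-\log\Delta_{H_0})$ with the analogous operator for $\cbK$. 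Now $X_{\cbH}$ is by definition the closure of this difference, which by \cite{AZ05, L} is essentially self-adjoint on $\mathcal D$, and similarly $X_{\cbK}$ is essentially self-adjoint on the corresponding domain. This essential self-adjointness is exactly what lets $\mathcal D$ serve as a core: for $\eta\in D(X_{\cbH})$ choose $\xi_n\in\mathcal D$ with $\xi_n\to\eta$ and $\tfrac1{2\pi}(\log\Delta_{H_1}-\log\Delta_{H_0})\xi_n\to X_{\cbH}\eta$; then $T\xi_n\to T\eta$ while $X_{\cbK}T\xi_n\to T X_{\cbH}\eta$, and closedness of $X_{\cbK}$ gives $TX_{\cbH}\subset X_{\cbK}T$. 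A bounded operator intertwining two self-adjoint generators intertwines their resolvents, hence their bounded Borel functional calculus, so $TU_{\cbH}(t)=U_{\cbK}(t)T$ for all $t$. Together with the hypotheses $T\Delta^{is}_{H_0}=\Delta^{is}_{K_0}T$ and $TJ_{H_0}=J_{K_0}T$, this intertwines all three generating families, hence the full $G$-representations, which is $(ii)$.

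I expect the main obstacle to be precisely the passage from the core $\mathcal D$ to the closed operator $X_{\cbH}$. The difference $\log\Delta_{H_1}-\log\Delta_{H_0}$ is a priori defined only on $\mathcal D$, and it is the essential self-adjointness supplied by the half-sided modular condition that guarantees the intertwining established on $\mathcal D$ survives under closure; without it, one could not legitimately transfer the relation to the translation unitaries via the functional calculus.
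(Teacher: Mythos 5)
Your proof is correct and follows essentially the same route as the paper, which simply asserts the equivalence as immediate from the fact that the $G$-representation is generated by the modular unitary groups of $H_0$, $H_1$ and the conjugation $J_{H_0}$. You merely supply the functional-analytic details behind that assertion — recovering the translation generator $X$ as the closure of $\frac1{2\pi}(\log\Delta_{H_1}-\log\Delta_{H_0})$ on the common core and passing the intertwining relation through the closure and the functional calculus — and these details are sound.
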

\begin{proof}
This follows immediately because the anti-unitary representation of $G$ on $\cH$ is generated by the modular unitary groups  and modular conjugation of $H_0$, $H_1$ and similarly for $\cbK$.  
\end{proof}
An \emph{intertwiner} $T$ between two half-sided modular inclusions $\cbH$ and $\cbK$
is a bounded, complex linear operator $T:\cH\to \cK$ as in the previous lemma. 
We denote the set of the intertwiners 
from $\cbH$ to $\cbK$ by $(\cbH,\cbK)$ and say that: $\cbH$ and $\cbK$ are  
\emph{unitary equivalent} if there is a unitary operator $U\in(\cbH,\cbK)$; 
$\cbH$ is \emph{irreducible} if the \emph{commutant} $(\cbH,\cbH)$ equals 
$\mathbb{C}$. Given a family of half sided modular inclusions $\cbH_\ell$, their 
\emph{direct sum}  is the half-sided modular inclusion defined by
\[
\bigoplus_\ell \cbH_\ell = \big(\oplus_\ell \cH_\ell, \oplus_\ell {H_0}_\ell, \oplus_\ell {H_1}_\ell\big)\, .
\]
As is well known, $G_0$ has only one irreducible unitary representation (up to unitary equivalence) with strictly positive energy, the \emph{Schr\"odinger representation}; this follows by von Neumann uniqueness theorem on the canonical commutation relations. Every irreducible unitary representation of $G_0$ with strictly positive energy is thus a multiple of the Schr\"odinger representation. Now, the Schr\"odinger representation of $G_0$ extends to an irreducible representation of $G$ on the same Hilbert space (see \cite{L}), we call it with the same name. The following proposition is essentially proved in \cite{L}. 
\begin{proposition}\label{CShU}
Let $U$ be an anti-unitary representation of  $G$  with strictly positive energy. Then $U$ is a multiple of the Schr\"odinger representation. Namely
\begin{equation}
U = \bigoplus_{\ell}  U_\ell
\end{equation}
where each $U_\ell$ is unitary equivalent to  the Schr\"odinger representation. 
\end{proposition}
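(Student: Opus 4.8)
\emph{Proposal.} The plan is to reduce the whole statement to the restriction $U\nrestr G_0$ together with the single anti-unitary operator implementing the reflection, and then to invoke the uniqueness facts for $G_0$ recalled before the statement. First I would restrict $U$ to the proper subgroup $G_0$. Since $U$ has strictly positive energy, the translation generator is strictly positive, so $U\nrestr G_0$ is a strictly-positive-energy unitary representation of the ``$ax+b$'' group. By the von Neumann uniqueness theorem quoted above, every such representation is a multiple of the Schr\"odinger representation: there is a unitary identification $\cH \simeq \cH_0\otimes\cM$, with $\cH_0$ the Schr\"odinger space and $\cM$ a multiplicity Hilbert space, under which $U(g)=U_0(g)\otimes 1$ for all $g\in G_0$, where $U_0$ is the Schr\"odinger representation.

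Next I would analyse the reflection. Write $r\in G$ for the reflection $x\mapsto -x$ and $J=U(r)$, an anti-unitary involution (as $r^2=e$) normalising $G_0$ via the involutive automorphism $\alpha=\mathrm{Ad}(r)$, which inverts translations and fixes dilations. By hypothesis $U_0$ extends to the irreducible representation of $G$ recalled before the statement; let $J_0$ be the anti-unitary involution on $\cH_0$ implementing $r$ there, so $J_0\,U_0(g)\,J_0 = U_0(\alpha(g))$. Fixing any conjugation $C_0$ on $\cM$, the operator $J_0\otimes C_0$ is an anti-unitary involution on $\cH_0\otimes\cM$ implementing $\alpha$ on $U\nrestr G_0$. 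Then $W\defi J\,(J_0\otimes C_0)$ is unitary and, using $\alpha^2=\mathrm{id}$ together with $J\,U(h)\,J = U(\alpha(h))$ for $h\in G_0$, a direct computation gives $W\,(U_0(g)\otimes 1)\,W^{-1}=U_0(g)\otimes 1$ for every $g\in G_0$.

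Here irreducibility of the Schr\"odinger representation enters: the commutant of $\{U_0(g)\otimes 1:g\in G_0\}$ is $1\otimes\cB(\cM)$, so $W=1\otimes W_0$ for a unitary $W_0$ on $\cM$, whence $J=J_0\otimes A$ with $A\defi W_0 C_0$ anti-unitary. Imposing $J^2=1$ and $J_0^2=1$ forces $A^2=1$, i.e.\ $A$ is a conjugation on $\cM$; decomposing $\cM$ along an orthonormal basis of $A$-fixed vectors splits $\cH_0\otimes\cM$ into a direct sum of copies of $\cH_0$, each invariant under $U(G_0)$ and under $J$, hence carrying the extended Schr\"odinger representation of the full group $G$. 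This yields $U=\bigoplus_\ell U_\ell$ with each $U_\ell$ unitarily equivalent to the Schr\"odinger representation, as claimed.

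Finally, I would note the main obstacle. The structural content for $G_0$ is entirely supplied by the cited Stone--von Neumann uniqueness, so the crux is the rigidity of the anti-unitary part: showing that the reflection can be implemented in essentially only one way once the $G_0$-action is fixed. The delicate point is to verify that $J_0\otimes C_0$ genuinely implements $\alpha$ and that $W$ lands in the commutant of $U(G_0)$; both hinge on $\alpha$ being involutive and on $J,J_0$ being anti-unitary, so that conjugation by $C_0$ correctly compensates the anti-linearity. Once $J=J_0\otimes A$ is established, the reduction of the conjugation $A$ to a direct sum of trivial one-dimensional blocks is routine.
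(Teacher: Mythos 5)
Your proposal is correct and follows essentially the same route as the paper: restrict to $G_0$, invoke von Neumann uniqueness to write the representation as $V_0\otimes 1$, observe that the product of the given reflection operator with a reference involution $J_0\otimes C_0$ lands in the commutant $1\otimes\cB(\cM)$ by irreducibility, and conclude that the reflection is $J_0\otimes A$ for a conjugation $A$ on the multiplicity space. The only cosmetic difference is that the paper finishes by citing the unitary equivalence of any two anti-unitary involutions on the multiplicity space, whereas you diagonalise $A$ directly via an orthonormal basis of fixed vectors --- the same fact in different clothing.
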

\begin{proof}
Let $U_0$ be the restriction of $U$ to $G_0$. Thus $U_0$ is unitarily equivalent to $V_0\otimes 1$ on $\cH\otimes\cK$, where
with $V_0$ the  Schr\"odinger representation of $G_0$ on $\cH$ and $\cK$ a Hilbert space. With $V$ the the  Schr\"odinger representation of $G$ on $\cH$ extending $V_0$,
let $J_0$ be the anti-unitary involution on $\cH$ corresponding to the reflection $x\to -x$ by $V$, and $J_1$ an arbitrary anti-unitary involution on $\cH$. 
By identifying the Hilbert spaces of $U$ and $V_0\otimes 1$, we have to prove that if $J$ is a anti-unitary 
involution on $\cH\otimes\cK$ commuting with $V_0\otimes 1$, then there exists a unitary $T$ on $\cH\otimes\cK$  commuting with $V_0$ such that $J_0\otimes J_1 = TJT^*$. Now, $J(J_0\otimes J_1)$ commutes with $V_0$, so 
$J = J_0\otimes ZJ_1$ 
for some unitary $Z$ on $\cK$ with $J_1 Z J_1 = Z^*$. Since any two  anti-unitary 
involutions on $\cK$ are unitarily equivalent, the result follows 
\end{proof}
So, there is a unique, up to unitary equivalence, irreducible, half-sided modular inclusion whose associated positive energy unitary representation of  $G$  is the Schr\"odinger one; we call it the \emph{Schr\"odinger half-sided modular inclusion}. 
\begin{corollary}\label{CShH}
Let $\cbH$ be a factorial half-sided modular inclusion. Then $\cbH$ is a multiple of the Schr\"odinger representation. Namely
\begin{equation}\label{Ug}
\cbH = \bigoplus_{\ell}  \cbH_\ell
\end{equation}
where each $\cbH_\ell$ is unitary equivalent to  the Schr\"odinger representation. 
\end{corollary}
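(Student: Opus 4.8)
The plan is to deduce the corollary directly from Proposition \ref{CShU} by transporting the decomposition through the bijective correspondence between half-sided modular inclusions and positive energy anti-unitary representations of $G$ recorded above (following \cite{BGL02}). First I would observe that, since $\cbH$ is factorial, its associated representation $U$ has strictly positive energy: by Proposition \ref{factorial} the abelian part $H_a$ is exactly the $G$-fixed-point subspace, so factoriality ($H_a = \{0\}$) is precisely the absence of nonzero fixed vectors, i.e. strict positivity of the energy. Proposition \ref{CShU} then yields $U = \bigoplus_\ell U_\ell$ with each $U_\ell$ unitarily equivalent to the Schrödinger representation, and what remains is to pull this decomposition back from representations to inclusions.

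The substance of the argument is to check that a direct sum decomposition of $U$ induces a direct sum decomposition of the triple $\cbH = (\cH, H_0, H_1)$ in the sense defined above. For this I would recall how the standard subspaces are reconstructed from $U$: the dilation subgroup is represented by $\Delta_{H_0}^{is}$ and the reflection by $J_{H_0}$, so the modular data of $H_0$ is read off $U$; then $H_0$ is the fixed-point space of the Tomita operator $S_{H_0} = J_{H_0}\Delta_{H_0}^{1/2}$, while $H_1 = U(1)H_0$ by relation $c)$. Since $\Delta_{H_0}^{is}$, $J_{H_0}$ and $U(1)$ all split as $\bigoplus_\ell(\cdot)_\ell$ along $U = \bigoplus_\ell U_\ell$, the closed operator $S_{H_0}$ splits too, whence $H_0 = \bigoplus_\ell {H_0}_\ell$ and, applying $U(1)$, $H_1 = \bigoplus_\ell {H_1}_\ell$. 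This gives $\cbH = \bigoplus_\ell \cbH_\ell$ with $\cbH_\ell = (\cH_\ell, {H_0}_\ell, {H_1}_\ell)$ the half-sided modular inclusion whose representation is $U_\ell$. Since each $U_\ell$ is the Schrödinger representation, each $\cbH_\ell$ is by definition unitarily equivalent to the Schrödinger half-sided modular inclusion.

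The hard part will be making the compatibility of the last paragraph genuinely rigorous rather than a slogan: one must confirm that passing to the fixed-point space of $S_{H_0}$ commutes with taking orthogonal direct sums (which is clear once one notes $S_{H_0}$ is closed and its graph splits along the decomposition of $\cH$), and that the reconstruction $H_0 \mapsto S_{H_0} \mapsto (\Delta_{H_0}, J_{H_0})$ is exactly the inverse of the map assigning to an inclusion its representation. Everything else — in particular the appeal to Proposition \ref{CShU} and the identification "factorial $\iff$ strictly positive energy" via Proposition \ref{factorial} — is a direct reading of the correspondence, so I expect no further obstacle.
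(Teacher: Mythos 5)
Your proposal is correct and follows exactly the route the paper intends: the paper's proof is literally ``Immediate by the above discussion,'' meaning precisely the chain you spell out --- factoriality gives strictly positive energy via Proposition \ref{factorial}, Proposition \ref{CShU} decomposes the representation, and the one-to-one correspondence with representations (not just equivalence classes) transports the direct sum back to the triple. Your extra care in checking that $S_{H_0}$ and hence $H_0$, $H_1$ split along the decomposition is a faithful filling-in of details the paper leaves implicit, not a different argument.
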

\begin{proof}
Immediate by the above discussion.
\end{proof}
Clearly, if $\cbH$ is not factorial, then $\cbH$ is the direct sum of a multiple of the Schr\"odinger representation and a multiple of the trivial inclusion. 

\subsection{Entropy convexity in the modular family parameter}
\label{Ac}

\begin{lemma}\label{Lop} With $\cbH, \cbK, \cbH_\ell$ half-sided modular inclusions of standard subspaces,
we have

$(i)$ $S^{K}_{U\phi} =  S^{H}_{\phi}$ for any unitary $U:\cH\to \cK$ such that $UH = K$  and $\phi\in \cH$.

$(ii)$ If $\cH = \bigoplus_{\ell} \cH_\ell$ and $H = \bigoplus_{\ell} H_\ell$, then
\[
S^{H}_{\phi} = \sum_{\ell} S^{H_\ell}_{\phi_\ell}
\]
for any vector $\phi =\bigoplus_{\ell}\phi_\ell \in\cH$.
\end{lemma}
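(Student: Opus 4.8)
The statement to prove is Lemma~\ref{Lop}, which asserts two compatibility properties of the vector entropy $S^H_\phi$: invariance under unitaries that map standard subspaces to standard subspaces, and additivity over orthogonal direct sums. Both are essentially functorial properties of the entropy functional, so the plan is to reduce everything to the defining formula \eqref{Entrgen} together with the behaviour of the modular data under unitaries and direct sums, all of which are recorded in the bulleted list of Section~\ref{Aa}.

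\textbf{Part (i).} The plan is to track how every object entering the definition of $S^H_\phi$ transforms under a unitary $U:\cH\to\cK$ with $UH=K$. First I would invoke the stated fact that $\Delta_K=U\Delta_H U^*$ and $J_K=UJ_H U^*$, so that $\log\Delta_K=U\log\Delta_H U^*$ and hence $U$ intertwines the spectral resolutions of $\Delta_H$ and $\Delta_K$. Since $U$ is complex linear and unitary it respects the imaginary part of the inner product, and it carries $H'$ to $K'$ (because the symplectic complement is defined purely through $\Im(\cdot,\cdot)$, which $U$ preserves); consequently $U$ conjugates the cutting projection, $P_K=UP_H U^*$, and likewise maps the abelian/factorial decomposition $\cH=\cH_a\oplus\cH_f$ onto the corresponding decomposition of $\cK$. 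In the factorial case one then reads off from \eqref{Sintro} that $S^K_{U\phi}=\Im(U\phi,P_K i\log\Delta_K\, U\phi)=\Im(U\phi,UP_H i\log\Delta_H U^* U\phi)=\Im(\phi,P_H i\log\Delta_H\phi)=S^H_\phi$, using unitarity of $U$ in the last step; in the abelian case the same conjugation applied to the projection $E$ in \eqref{Eab} gives the identity, and the general case follows by \eqref{Entrgen}. I do not expect any genuine obstacle here beyond bookkeeping; the only point requiring a line of justification is that $U$ maps $H'$ to $K'$, which is immediate from $\Im(U\psi,U\phi)=\Im(\psi,\phi)$.

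\textbf{Part (ii).} For additivity over direct sums the plan is to use the stated compatibility of the modular data with orthogonal sums, namely $\Delta_{H\oplus K}=\Delta_H\oplus\Delta_K$ and $J_{H\oplus K}=J_H\oplus J_K$, which extends inductively (and then by the monotone-continuity property in the list to countable sums) to $\Delta_{\bigoplus_\ell H_\ell}=\bigoplus_\ell\Delta_{H_\ell}$. Since the symplectic complement of a direct sum is the direct sum of the complements, the cutting projection decomposes as $P_{\bigoplus_\ell H_\ell}=\bigoplus_\ell P_{H_\ell}$, and $\log\Delta$ likewise splits blockwise; substituting into \eqref{Sintro} and using that $\Im(\phi,A\phi)=\sum_\ell\Im(\phi_\ell,A_\ell\phi_\ell)$ for a block-diagonal $A$ and $\phi=\bigoplus_\ell\phi_\ell$ yields $S^{H}_\phi=\sum_\ell S^{H_\ell}_{\phi_\ell}$. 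The one technical subtlety, and the place I would be most careful, is the passage from finite to infinite direct sums: the formula $\Delta_{\bigoplus}=\bigoplus\Delta$ and the resulting additivity should be justified for countably many summands, which I would handle by approximating $\bigoplus_\ell H_\ell$ by the increasing union of its finite sub-sums and appealing to the monotone-continuity property $S^{H_n}_\phi\to S^H_\phi$ already listed among the basic properties of the entropy. With that approximation in hand the finite additivity, established directly from the block-diagonal structure of the modular operators and the cutting projection, upgrades to the countable case.
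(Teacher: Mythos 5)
Your proposal is correct and follows essentially the same route as the paper, which disposes of $(i)$ as immediate from the covariance of the modular data and the cutting projection under unitaries, and of $(ii)$ by finite block-diagonal additivity upgraded to countable sums via the monotone continuity property of the entropy. Your write-up simply supplies the bookkeeping the paper leaves implicit.
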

\begin{proof}
$(i)$ is immediate. $(ii)$ easily follows by monotone continuity. 
\end{proof}
Now, given a half-sided modular inclusion $\cbH=(\cH,H_0,H_1)$, by means of the associated translation one parameter unitary group,
we can define the  \emph{translated standard subspaces}: 
\begin{equation}
H_\lambda = U(\lambda) H_0 \ ,  \qquad   \lambda \in\mathbb R .
\end{equation}
Note that  
\[
 H_{\lambda_2}\subset H_{\lambda_1} \  , \qquad \lambda_2 >\lambda_1 \in\mathbb R  \ ,
\]
Clearly,
\[
 \Delta^{it}_\lambda = U(\lambda)\Delta^{it}_0 U(-\lambda) \ \ , \ \ 
 J_\lambda = U(\lambda)J_0 U(-\lambda) \ ,
\]
where $\Delta_\lambda$, $J_\lambda$ are the modular operator and conjugation of $H_\lambda$. 
We have 
\[
 U_{\cH}(\lambda)P_{H_0}=P_{H_\lambda} U_{\cH}(\lambda) \, ,
\]
Our aim is to study the convexity of the  function
given by the  translated subspaces $H_\lambda$ 
\[
S^{\cbH}_\psi(\lambda)=  - \Im(\psi,P_{H_\lambda} i\log\Delta_{H_\lambda} \psi)  \, ,
\]
for any given vector $\psi \in \cH$, in the quadratic form sense (see [15] for the definition in case $\psi$ is not in the domain of $\log \Delta_H$). 

\begin{lemma}\label{DS}
We have

$(i)$ $S^{\cbK}_{U\phi}(\lambda)=  S^{\cbH}_{\phi}(\lambda)$ for any unitary $U\in(\cbH,\cbK)$  and $\phi\in \cH$.

$(ii)$ If $\cH = \bigoplus_{\ell} \cH_\ell$, then
\[
S^{\cbH}_{\phi}(\lambda)= \sum_{\ell} S^{\cbH_\ell}_{\phi_\ell}(\lambda)
\]
for any vector $\phi =\bigoplus_{\ell}\phi_\ell \in\cH$.
\end{lemma}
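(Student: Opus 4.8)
The plan is to reduce both statements to the corresponding assertions of Lemma \ref{Lop}, the only extra input being that the translation structure is compatible with intertwiners and with direct sums. Throughout I write $U_{\cbH}(\cdot)$ and $U_{\cbK}(\cdot)$ for the translation unitary groups associated with $\cbH$ and $\cbK$, so that $H_\lambda = U_{\cbH}(\lambda)H_0$ and $K_\lambda = U_{\cbK}(\lambda)K_0$.

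For $(i)$, I would first check that the unitary intertwiner $U\in(\cbH,\cbK)$ carries $H_\lambda$ onto $K_\lambda$. Since $U$ intertwines $\Delta^{is}_{H_0}$ and $J_{H_0}$ with their counterparts for $K_0$, the polar decomposition $S_{H_0}=J_{H_0}\Delta^{1/2}_{H_0}$ gives $U S_{H_0}=S_{K_0}U$; as $U$ is unitary and $H_0$, $K_0$ are precisely the $+1$ fixed-point spaces of the involutions $S_{H_0}$, $S_{K_0}$, this forces $U H_0=K_0$. Moreover $U$ intertwines $\log\Delta_{H_0}$ and $\log\Delta_{H_1}$, hence the generator $X=\frac1{2\pi}(\log\Delta_{H_1}-\log\Delta_{H_0})$ and therefore the translation group, $U\,U_{\cbH}(\lambda)=U_{\cbK}(\lambda)\,U$. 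Combining the two facts,
\[
U H_\lambda = U\,U_{\cbH}(\lambda)H_0 = U_{\cbK}(\lambda)\,U H_0 = U_{\cbK}(\lambda)K_0 = K_\lambda\,,
\]
so $U$ is a unitary with $U H_\lambda=K_\lambda$, and Lemma \ref{Lop}$(i)$ yields $S^{K_\lambda}_{U\phi}=S^{H_\lambda}_\phi$, i.e.\ $S^{\cbK}_{U\phi}(\lambda)=S^{\cbH}_\phi(\lambda)$.

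For $(ii)$, the direct-sum hypothesis comes with $H_0=\bigoplus_\ell H_{0,\ell}$ and $H_1=\bigoplus_\ell H_{1,\ell}$, so that $\Delta_{H_0}$ and $\Delta_{H_1}$ split as direct sums over $\ell$. Hence the generator $X$ splits as $\bigoplus_\ell X_\ell$, and the translation group splits accordingly, $U_{\cbH}(\lambda)=\bigoplus_\ell U_{\cbH_\ell}(\lambda)$. Consequently
\[
H_\lambda = U_{\cbH}(\lambda)H_0 = \bigoplus_\ell U_{\cbH_\ell}(\lambda)H_{0,\ell} = \bigoplus_\ell H_{\lambda,\ell}\,,
\]
and Lemma \ref{Lop}$(ii)$, applied to the decomposition $H_\lambda=\bigoplus_\ell H_{\lambda,\ell}$, gives $S^{H_\lambda}_\phi=\sum_\ell S^{H_{\lambda,\ell}}_{\phi_\ell}$, that is $S^{\cbH}_\phi(\lambda)=\sum_\ell S^{\cbH_\ell}_{\phi_\ell}(\lambda)$.

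I expect no serious obstacle, since both parts are just the $\lambda$-translated restatement of Lemma \ref{Lop}. The only genuine point to verify is that the translation group respects the relevant structure --- commuting with the intertwiner in $(i)$ and splitting over the direct sum in $(ii)$ --- and in both cases this follows because the generator $X$ is built from the modular operators that the intertwiner, resp.\ the decomposition, already respects. The single subtlety worth a line is the domain bookkeeping in passing from the intertwining of $\Delta^{is}_{H_0}$ to that of $S_{H_0}$ and of $X$ (the latter defined as a closure on $D(\log\Delta_{H_1})\cap D(\log\Delta_{H_0})$), but this is routine given the polar decomposition and Stone's theorem.
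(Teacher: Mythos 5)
Your proof is correct and follows the same route as the paper, which simply declares the lemma ``immediate by Lemma \ref{Lop}''; you have filled in exactly the details the authors leave implicit, namely that an intertwiner $U\in(\cbH,\cbK)$ satisfies $UH_\lambda=K_\lambda$ (via $US_{H_0}=S_{K_0}U$ and the intertwining of the translation generators) and that the translation group splits over a direct sum so that $H_\lambda=\bigoplus_\ell H_{\lambda,\ell}$. No gaps; the only cosmetic remark is that your $S_{H_0}=J_{H_0}\Delta^{1/2}_{H_0}$ is the correct polar decomposition (the paper's Section 2.1 has a typo omitting the exponent $1/2$).
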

\begin{proof}
Immediate by Lemma \ref{Lop}.
\end{proof}
\begin{theorem}\label{Sl}
Let $\cbH$ be a half-sided modular inclusion and $\phi\in\cal H$ a vector. The function
\[
S^{\cbH}_\phi: \lambda \in \mathbb R\to 
S^{H_\lambda}_\phi \in [0, \infty]
\] is convex in $[0,\infty)$. 

Suppose further that $S^{H_{\lambda_0}}_\phi < \infty$ for some $\lambda_0\in\mathbb R$. Then
\begin{itemize}
\item[(i)] $S^{\cbH}_\phi(\lambda)$ is finite and $C^1$ on $[\lambda_0, \infty)$;
\item[(ii)] $\frac{d}{d\lambda}S^{\cbH}_\phi(\lambda)$ is absolutely continuous in $[\lambda_0, \infty)$ with almost everywhere non-negative derivative
$\frac{d^2}{d\lambda^2}S^{\cbH}_\phi(\lambda)\geq 0$\ . \end{itemize}
\end{theorem}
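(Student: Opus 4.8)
The plan is to reduce the statement to a single copy of the Schr\"odinger half-sided modular inclusion, where $S^{\cbH}_\phi(\lambda)$ can be written in closed form, and then to read off convexity and the regularity claims directly from that expression. First I would dispose of the abelian part: by Proposition \ref{factorial} the fixed-point subspace $H_a=H_{0,a}=H_{1,a}$ is left pointwise fixed by the associated representation of $G$, so $U(\lambda)H_a=H_a$ for every $\lambda$; hence the abelian translated subspaces are constant in $\lambda$ and their contribution to $S^{\cbH}_\phi$ is a $\lambda$-independent constant in $[0,\infty]$, trivially convex and finite (or not) simultaneously at every point. On the factorial part Corollary \ref{CShH} gives $\cbH=\bigoplus_\ell\cbH_\ell$ into copies of the Schr\"odinger inclusion, and Lemma \ref{DS}$(ii)$ turns this into $S^{\cbH}_\phi(\lambda)=\sum_\ell S^{\cbH_\ell}_{\phi_\ell}(\lambda)$. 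Since a finite or countable sum of convex $[0,\infty]$-valued functions is convex, being a supremum of convex partial sums, the convexity assertion reduces to a single Schr\"odinger summand.

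In that summand I would use the explicit lightray model, in which $U$ is translation with positive generator $X\geq 0$ and $H_0$ is the standard subspace of a half-line. The algebraic input is $\log\Delta_{H_\lambda}=\log\Delta_{H_0}+2\pi\lambda X$ together with the covariance $P_{H_\lambda}=U(\lambda)P_{H_0}U(-\lambda)$, which collapse the definition to $S^{\cbH}_\phi(\lambda)=S^{H_0}_{U(-\lambda)\phi}$. Computing this entropy in the model should yield a closed form
\[
S^{\cbH}_\phi(\lambda)=2\pi\int_\lambda^\infty (t-\lambda)\,\rho_\phi(t)\,dt ,
\]
with $\rho_\phi\geq 0$ equal, up to normalisation, to the squared modulus of the lightray data of $\phi$, hence a nonnegative locally integrable density. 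Convexity is then immediate, since for each fixed $t$ the function $\lambda\mapsto(t-\lambda)_+$ is convex and $S^{\cbH}_\phi$ is their superposition against the positive measure $\rho_\phi\,dt$; the same holds after reassembling the summands, the total density being $\sum_\ell\rho_\ell\geq 0$. If $S^{H_{\lambda_0}}_\phi<\infty$, then by \eqref{Sfin} the density $\rho_\phi$ is integrable enough on $[\lambda_0,\infty)$ that the entropy is finite there, $\frac{d}{d\lambda}S^{\cbH}_\phi(\lambda)=-2\pi\int_\lambda^\infty\rho_\phi(t)\,dt$ is continuous (whence $C^1$) and absolutely continuous, with $\frac{d^2}{d\lambda^2}S^{\cbH}_\phi(\lambda)=2\pi\rho_\phi(\lambda)\geq 0$ almost everywhere.

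The genuine work is in establishing the closed form for an arbitrary, not necessarily smooth, $\phi$, since $\log\Delta_{H_0}$ is unbounded and the cutting projection $P_{H_0}$ is only densely defined, closed and merely real-linear; I would first validate the formula on a smooth core and then extend it using the domain characterisation \eqref{Sfin} together with lower semicontinuity and monotone continuity of the entropy. Two further delicate points are (a) justifying the termwise $\lambda$-differentiation across the possibly infinite direct sum, for which the finiteness hypothesis $S^{H_{\lambda_0}}_\phi<\infty$ must supply the domination $\sum_\ell\rho_\ell\in L^1_{\mathrm{loc}}$ on $(\lambda_0,\infty)$, and (b) the regularity up to the left endpoint $\lambda_0$: a finite first moment of $\rho_\phi$ alone does not force $\rho_\phi$ to be integrable near $\lambda_0$, so here the sharper domain condition \eqref{Sfin}, rather than mere finiteness of the integral, is what must be invoked to secure that $S^{\cbH}_\phi$ is $C^1$ on the closed interval $[\lambda_0,\infty)$.
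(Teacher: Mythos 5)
Your route is the same as the paper's: remove the abelian part via Proposition \ref{factorial}, split the factorial part into Schr\"odinger copies (Corollary \ref{CShH}, Lemmas \ref{Lop} and \ref{DS}), and reduce everything to the closed form $S^{\cbH_\ell}_{\phi_\ell}(\lambda)=\pi\int_\lambda^{\infty}(x-\lambda)\,\phi_\ell'^{\,2}(x)\,dx$, from which convexity (superposition of the convex functions $\lambda\mapsto(x-\lambda)_+$ against a nonnegative density) and the regularity claims are read off by differentiating under the integral. The step you explicitly defer as ``the genuine work'' --- validity of the closed form for an arbitrary, non-smooth $\phi\in\cH$ --- is precisely where the paper's effort lies, and ``validate on a smooth core and extend by lower semicontinuity and monotone continuity'' is not by itself a proof: the extension is carried out in the paper by showing that the quadratic form $q(\phi)=\int_0^\infty x\,\phi'(x)^2dx$ is closable (Lemma \ref{qC}), that $C_0^\infty(\mathbb R)$ is a form core for the entropy form $S(h,k)=\Im(h,P_H i\log\Delta_H k)$ because it is a core for $\sqrt{|\log\Delta_H|}\,E_-$, and that the two closed forms therefore coincide (Corollary \ref{CorS}, Theorem \ref{STh}); for the bare convexity statement the paper separately uses approximation of $\phi$ in the graph norm of $\sqrt{|\log\Delta_H|}\,E_-$, stability of $D(T)$ under $U(-\lambda)$ (Lemma \ref{UT}), and the fact that pointwise limits of convex functions are convex (Lemma \ref{ShC}). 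One of these mechanisms needs to be supplied for your argument to close.

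A second point: your caveat (b) about regularity at the left endpoint is well spotted, but the remedy you propose is circular. By \eqref{Sfin} together with Corollary \ref{CorS}, the domain condition $\phi\in D\bigl(\sqrt{|\log\Delta_{H_{\lambda_0}}|}\,E_-\bigr)$ is \emph{equivalent} to finiteness of the first moment $\int_{\lambda_0}^\infty(x-\lambda_0)f(x)\,dx$, not sharper than it, so it cannot upgrade $f\in L^1((\lambda_0+\epsilon,\infty))$ (which the first moment already gives for every $\epsilon>0$) to $f\in L^1((\lambda_0,\infty))$. This only affects the behaviour at the single point $\lambda_0$: on $(\lambda_0,\infty)$ the function is finite, $C^1$, with absolutely continuous derivative and $\frac{d^2}{d\lambda^2}S^{\cbH}_\phi=\pi f\geq 0$ a.e., exactly as claimed; at $\lambda_0$ convexity still guarantees a one-sided derivative, possibly $-\infty$. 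Either restrict the differentiability claims to the open interval or give a genuine additional argument at the endpoint; do not present the domain condition of \eqref{Sfin} as that argument.
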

\begin{proof} 
By considering the factorial decomposition of $\cbH$, since the Abelian part is fixed by the translation unitaries by Prop. \ref{factorial}, we may assume that $\cbH$ is factorial.

The convexity statement follows by Lemmas \ref{Lop}, \ref{DS} and \ref{ShC} because the (finite or infinite) sum of convex functions is convex. 

The proof of the remaining statements relies on Theorem \ref{STh}. If $S^{\cbH}_\phi(\lambda_0)$ is finite, then $S^{\cbH}_\phi(\lambda)$ is finite too, $\lambda\geq \lambda_0$, by the monotonicity of the entropy. By replacing $H$ with $H_{\lambda_0}$, we may assume that $\lambda_0 = 0$. 
Now, up to unitary equivalence, $\cbH = \bigoplus_\ell \cbH_\ell$, with $\cbH_\ell$ the Schr\"ordinger representation, see Section \ref{App}. So we may assume that $\cbH_\ell$ is actually the Schr\"ordinger representation.

With $\phi =\oplus\phi_\ell$ the decomposition of $\phi$, we have ${\phi'_\ell}^2\in L^1(\mathbb R_+, dx)$ and
\[
S^{\cbH_\ell}_{\phi_\ell}(\lambda) = 
 \pi\!\int_{\lambda}^{+\infty} (x-\lambda){\phi'_\ell}^2(x){\rm d}x\, .
 \]
 Then
 \[
S^{\cbH}_{\phi}(\lambda) = \sum_\ell S^{\cbH_\ell}_{\phi_\ell}(\lambda) = 
\sum_\ell \pi\!\int_{\lambda}^{+\infty} (x-\lambda){\phi'_\ell}^2(x){\rm d}x
= \pi\!\int_{\lambda}^{+\infty} (x-\lambda){f}(x){\rm d}x\, ,
 \]
 with $f = \sum_\ell {\phi'_\ell}^2$ almost everywhere. Thus $f\in L^1(\mathbb R_+, dx)$, so $S^{\cbH}_\phi$ is differentiable on $[0, \infty)$ and the derivative
 \[
\frac{d}{d\lambda}S^{\cbH}_\phi(\lambda) = -\pi\!\int_\lambda^{+\infty}f(x){\rm d}x 
 \]
is continuous. Indeed, $\frac{d}{d\lambda}S^{\cbH}_\phi$ is absolutely continuous and
\[
\frac{d^2}{d\lambda^2}S^{\cbH}_\phi(\lambda) = \pi f(\lambda)
\]
almost everywhere as desired. 
\end{proof}
\subsection{The Schr\"odinger representation}\label{App}
In order to analyse the entropy properties associated with any (non smooth) vector and the modular family, we need to study a specific representation. Our results in this section extends those in \cite{L20}, see also \cite{BCD} for further results. 

We realise the Schr\"odinger representation in the Hilbert space $\cH$ is $L^2(\mathbb R_+, p \,dp)$. Given $g\in L^2(\mathbb R_+, p\,dp)$, we extend $g$ to a function $\tilde g\in L^2(\mathbb R, |p|\,dp)$ by setting $\tilde g(-p) = \overline{g(p)}$, $p>0$. Clearly $\tilde g$ is a tempered distribution whose Fourier anti-transform is a real tempered distribution. Call $\mathfrak{L}$ the set of the so obtained distributions:
\[
\mathfrak{L}  = \big\{\phi \in {\cal S}'_{\rm real}(\mathbb R) : \hat \phi|_{[0,\infty)} \in L^2(\mathbb R_+, p\,dp)\big\}\, .
\]
So $\mathfrak{L}$ is a real linear space and the Fourier transform gives a real linear, one-to-one identification of $\mathfrak{L}$ with $\cH$ (as real vector spaces)
\[
\phi\in \frak L \mapsto \hat \phi|_{[0,\infty)} \in \cH \, .
\]
As subspace of $\frak L$, the space $C^\infty_0(\bR)$ of real, compactly supported, smooth functions on $\mathbb R$ 
embeds into into $\cal H$, with dense range.

The translation and dilation one-parameter unitary groups $U$ and $V$ are given as follows and 
define an irreducible, unitary representation of the group $G$ with positive energy.  
We have 
\[
(U(s)\phi)(x)=\phi(x-s) \, , \ \ (V(t)\phi)(x)=e^{-t}\phi(e^t x)\, ,\quad \phi \in \frak L\, ,
\]
so $
 V(t)U(s)=U(e^{-t}s) V(t)$. The anti-unitary reflection $J$ is given by $(J\phi)(x) = \phi(-x)$, $\phi
\in \frak L$. 
 
 For every $\lambda\in\mathbb R$, we set
 \[
 H_\lambda = \big\{\phi\in\frak L : {\rm supp}(\phi) \subset [\lambda,\infty)\big\}\, .
 \] 
 Then $H_\lambda$ is a standard subspace of $\cH$ and $C_0^\infty([0,\infty))$ embeds as a dense, real linear subspace of $H_\lambda$. Clearly $U(\lambda)H_0 = H_\lambda$ and these standard subspaces are associated with the Schr\"odinger half sided modular inclusion $\cbH = (\cH, H_0, H_1)$. 
 
 The following proposition is proved in \cite{L20} (actually, the formula in \cite[Sect. 4]{L20} is more general, for sectors). We begin by giving a simpler proof here. 
\begin{proposition}\label{SS}
With $\cbH$ the Schr\"odinger representation, we have
\begin{equation}\label{S0}
S^{\cbH}_\phi(\lambda) = \pi\!\int_{\lambda}^{+\infty} (x-\lambda)\phi'^2(x){\rm d}x\, ,\quad \phi \in C_0^\infty(\mathbb R)\, ,
\end{equation}
where $\phi'$ is the derivative of $\phi$.
Therefore
\begin{equation}\label{S1}
\frac{d}{d\lambda}S^{\cbH}_\phi(\lambda) = -\pi\!\int_\lambda^{+\infty}\phi'^2(x){\rm d}x \leq 0 \, ,\quad \phi \in C_0^\infty(\mathbb R)\, ,
\end{equation}
\begin{equation}\label{S2}
\frac{d^2}{d\lambda^2}S^{\cbH}_\phi(\lambda) = \pi \phi'^2(\lambda) \geq 0 \, ,\quad \phi \in C_0^\infty(\mathbb R)\, .
\end{equation}
\end{proposition}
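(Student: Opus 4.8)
The plan is to compute the entropy $S^{\cbH}_\phi(\lambda)$ for the Schr\"odinger representation directly in the concrete $L^2(\mathbb R_+, p\,dp)$ model, exploiting the explicit form of the translation group and the cutting projection. Since $H_\lambda = U(\lambda)H_0$ and the modular data transform covariantly, Lemma \ref{Lop}(i) together with the relation $U_{\cH}(\lambda)P_{H_0} = P_{H_\lambda}U_{\cH}(\lambda)$ reduces everything to the case $\lambda = 0$: indeed $S^{\cbH}_\phi(\lambda) = S^{H_0}_{U(-\lambda)\phi}$, so it suffices to have a closed expression for $S^{H_0}_\psi$ and then substitute $\psi = U(-\lambda)\phi$, i.e.\ $\psi(x) = \phi(x+\lambda)$. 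First I would record the modular structure of $H_0 = \{\phi\in\frak L : \supp(\phi)\subset[0,\infty)\}$: this is the standard subspace attached to the half-line, whose modular operator is the generator of dilations, so $\log\Delta_{H_0}$ acts as a concrete dilation generator and $i\log\Delta_{H_0}$ is (up to the $2\pi$ normalisation) the dilation vector field $x\partial_x + \tfrac12$ in the appropriate variables.

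The heart of the computation is to evaluate the quadratic form $\Im(\psi, P_{H_0}\, i\log\Delta_{H_0}\,\psi)$ for a smooth compactly supported real test function. Here the cutting projection $P_{H_0}$ onto the half-line subspace, combined with the dilation generator, should collapse the form to a boundary/interval integral. I expect the cleanest route is to pass to the ``rapidity'' variable $x = e^\theta$ or to work with the Mellin/Fourier picture where dilations become translations and the standard subspace becomes a Hardy-type space; there the modular flow is a shift and $P_{H_0}i\log\Delta_{H_0}$ becomes a first-order differential operator localised on the support. After unwinding the Fourier identification $\phi\mapsto\hat\phi|_{[0,\infty)}$ that defines $\frak L$, the pairing $\Im(\psi,P_{H_0}i\log\Delta_{H_0}\psi)$ should reduce, via integration by parts, to $\pi\int_0^\infty x\,\psi'^2(x)\,dx$. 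Re-inserting $\psi(x) = \phi(x+\lambda)$ and changing variables $x\mapsto x-\lambda$ then yields exactly $\pi\int_\lambda^{+\infty}(x-\lambda)\phi'^2(x)\,dx$, which is \eqref{S0}.

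The differentiated formulas \eqref{S1} and \eqref{S2} follow from \eqref{S0} by elementary differentiation under the integral sign: for $\phi\in C_0^\infty(\mathbb R)$ the integrand and its $\lambda$-derivative are dominated uniformly on compacta, so $\frac{d}{d\lambda}S^{\cbH}_\phi(\lambda) = -\pi\int_\lambda^{+\infty}\phi'^2(x)\,dx$ and a second differentiation gives $\frac{d^2}{d\lambda^2}S^{\cbH}_\phi(\lambda) = \pi\phi'^2(\lambda)$. Positivity of the second derivative is then manifest, which is the convexity statement in this smooth model.

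The main obstacle I anticipate is the careful identification of the operator $P_{H_0}\,i\log\Delta_{H_0}$ on the concrete model and the justification that the quadratic form genuinely reduces to the claimed local integral rather than picking up additional nonlocal contributions. The cutting projection $P_{H_0}$ is only densely defined and closed (not bounded), and it does not commute with $\log\Delta_{H_0}$ in a naive way, so one must verify that $\phi\in C_0^\infty(\mathbb R)$ lies in the appropriate domain and that the antilinear/symplectic bookkeeping of $\Im(\cdot,\cdot)$ produces the real weight $(x-\lambda)$ with the correct factor $\pi$. I would handle this by first establishing the formula on a dense core where the dilation generator acts classically, controlling the boundary terms at the endpoint of the support (which vanish precisely because $\phi$ is smooth and compactly supported), and only then invoking lower semicontinuity to extend if needed. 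The appearance of the constant $\pi$ is the delicate normalisation point, traceable to the $\tfrac1{2\pi}$ in the definition of $X$ and the factor of $2$ relating $\Delta_{H_0}$ to the dilation generator; getting this constant right is where most of the bookkeeping effort will go.
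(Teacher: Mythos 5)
Your proposal follows essentially the same route as the paper: the paper's proof likewise works in the concrete model, using the explicit formulas $\Im(\phi,\psi)=\tfrac{i}{2}\int\phi'\psi\,dx$, $(i\log\Delta_{H_0}\phi)(x)=2\pi x\phi'(x)$ and $P_{H_0}\phi=\phi\chi_{[0,\infty)}$ to obtain $S^{\cbH}_\phi(0)=\pi\int_0^\infty x\phi'^2\,dx$, then extends to general $\lambda$ by translation covariance and differentiates. Your plan identifies all the same ingredients (dilation generator, cutting projection as multiplication by the indicator, reduction to $\lambda=0$, elementary differentiation), so it is correct and not a genuinely different argument.
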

\begin{proof}
We have
\[
\Im(\phi, \psi) = \frac{1}2\int \phi'(x)\psi(x)dx\, ,\quad\phi\in C_0^\infty(\mathbb R)\, ,
\]
\[
-(i\log\Delta_{H_0} \phi)(x) = 2\pi x \phi'(x) \, , \quad \phi\in C_0^\infty(\mathbb R)\, ,
\]
\[
P_{H_0}\phi = \phi\chi_{[0,\infty)}\, , \quad \phi\in C_0^\infty(\mathbb R)\, , \ \phi(0) = 0\, .
\]
Therefore, if $\phi\in C_0^\infty(\mathbb R)$, we have
\[
S^{\cbH}_\phi(0) = -\Im(\phi, P_{H_0}i\log \Delta_{H_0}\phi) =
\pi\!\int_{0}^{\infty} x\phi'^2(x){\rm d}x\, ,
\]
that implies \eqref{S0} by translation covariance. The rest of the proposition follows at once. 
\end{proof}
\begin{lemma}\label{ShC}
Theorem \ref{Sl} is true if $\cbH$ is the Schr\"odinger half-sided modular inclusion. 
\end{lemma}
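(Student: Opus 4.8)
The plan is to reduce the whole statement to the integral representation
$S^{\cbH}_\phi(\lambda)=\pi\int_\lambda^\infty (x-\lambda)\,\phi'^2(x)\,dx$
of Proposition \ref{SS}, promoted from the smooth core to an arbitrary vector, and then read off the analytic conclusions. For $\phi\in C_0^\infty(\mathbb R)$ this representation, together with \eqref{S1}--\eqref{S2}, is exactly Proposition \ref{SS}. The point to notice is that, once the representation is available for a general $\phi$ with $\phi'$ interpreted as a distributional derivative that happens to be (locally) an $L^2$ function, convexity is essentially free: writing $S^{\cbH}_\phi(\lambda)=\pi\int_{\mathbb R}(x-\lambda)^+\,\phi'^2(x)\,dx$ and using that $\lambda\mapsto(x-\lambda)^+$ is convex for each fixed $x$, the integral is a convex (possibly $+\infty$) function of $\lambda$. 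This disposes of the convexity claim on $[0,\infty)$ in all cases, including where $S^{\cbH}_\phi=+\infty$.

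So the core of the lemma is to establish the representation for arbitrary $\phi\in\cH$ and, under the hypothesis $S^{\cbH}_\phi(\lambda_0)<\infty$, to upgrade it to the global integrability $\phi'^2\in L^1(\mathbb R_+)$. Normalising $\lambda_0=0$ by translation covariance, I would first obtain the easy inequality by mollification: with $\phi_\varepsilon=\phi*\rho_\varepsilon\in C_0^\infty(\mathbb R)$ one has $\phi_\varepsilon\to\phi$ in $\cH$ and $\phi_\varepsilon'\to\phi'$ in $L^2_{\mathrm{loc}}(\mathbb R_+)$, so lower semicontinuity of the entropy gives $S^{\cbH}_\phi(\lambda)\le\pi\int_\lambda^\infty(x-\lambda)\,\phi'^2\,dx$. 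For the matching lower bound and the identification of the form domain I would diagonalise $\log\Delta_{H_0}$, which in this model is $2\pi$ times the generator of the dilation group $V$: a Mellin transform turns the modular operator, the cutting projection and the quadratic form \eqref{Sintro} into explicit multiplication and convolution operators, after which the finiteness criterion \eqref{Sfin} and the value of the form can be computed and matched to $\pi\int_0^\infty x\,\phi'^2\,dx$.

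I expect the main obstacle to be the endpoint analysis at $x=\lambda_0$, namely showing that finite entropy forces $\phi'^2\in L^1(\mathbb R_+)$ and not merely $\int_0^\infty x\,\phi'^2\,dx<\infty$; this is exactly the integrability needed to make $\frac{d}{d\lambda}S^{\cbH}_\phi(\lambda)=-\pi\int_\lambda^\infty\phi'^2\,dx$ finite and continuous up to the closed endpoint, hence $C^1$ on $[\lambda_0,\infty)$ rather than only on its interior. The mechanism I would use is that the cutting projection $P_{H_\lambda}$ enforces a vanishing condition at $x=\lambda$, so a defect of integrability there is registered by the negative spectral part $E_-$ entering \eqref{Sfin} and makes the entropy infinite; turning this into the quantitative implication $S^{\cbH}_\phi(\lambda_0)<\infty\Rightarrow\phi'^2\in L^1(\mathbb R_+)$ in the Mellin picture is the delicate step. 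With the representation and this integrability in place, absolute continuity of $\frac{d}{d\lambda}S^{\cbH}_\phi$ and $\frac{d^2}{d\lambda^2}S^{\cbH}_\phi=\pi\phi'^2\ge 0$ almost everywhere follow exactly as in the smooth case, completing the proof.
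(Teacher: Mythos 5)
Your route to the convexity claim is genuinely different from the paper's proof of this lemma. You propose to first extend the representation \eqref{S0} to every $\phi\in\cH$ and then read off convexity from the pointwise convexity of $\lambda\mapsto(x-\lambda)^+$. The paper does prove that extension, but elsewhere (Theorem \ref{STh} via Corollary \ref{CorS}) and by a different method: rather than mollification plus a Mellin diagonalisation, it shows that the form $q(\phi)=\int_0^\infty x\phi'(x)^2dx$ is closable (Lemma \ref{qC}), that the entropy form is lower semicontinuous hence closable with $C_0^\infty(\mathbb R)$ as a common form core (a core for $\sqrt{|\log\Delta_{H_0}|}E_-$), and that the two closures therefore coincide. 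For the present lemma the paper instead uses a softer argument that needs no representation at all: with $T=\sqrt{|\log\Delta_{H_0}|}E_-$, the criterion \eqref{Sfin} identifies finite-entropy vectors with $D(T)$; one approximates $\phi$ by $\phi_n\in C_0^\infty$ in the graph norm of $T$, notes that $U(-\lambda)$ preserves $D(T)$ (monotonicity of entropy) and is therefore graph-norm bounded (Lemma \ref{UT}), invokes the graph-norm continuity of the entropy from \cite{CLR} to get $S^{\cbH}_{\phi_n}(\lambda)\to S^{\cbH}_\phi(\lambda)$ for each $\lambda\ge 0$, and concludes because a pointwise limit of convex functions is convex. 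Your plan buys a single formula from which everything follows; the paper's buys convexity cheaply and isolates the analytic work in the closability argument. Note also that your mollification step needs care: $\phi*\rho_\varepsilon$ is neither compactly supported nor supported in $[\lambda,\infty)$, and $L^2_{\mathrm{loc}}$ convergence of derivatives does not by itself control the weighted integrals over an unbounded domain.

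The step you rightly single out as delicate is, however, a genuine obstruction as you have set it up: the implication $S^{\cbH}_\phi(\lambda_0)<\infty\Rightarrow\phi'^2\in L^1(\mathbb R_+)$ is false. Take $\lambda_0=0$ and $\phi(x)=2\sqrt{x_+}\,\eta(x)$ with $\eta\in C_0^\infty(\mathbb R)$ equal to $1$ near the origin: a direct estimate of the Gagliardo $\dot H^{1/2}$ seminorm shows $\phi\in\cH$, and $\int_0^\infty x\phi'(x)^2dx<\infty$, yet $\int_0^\infty\phi'(x)^2dx=\infty$, so the right derivative of $S^{\cbH}_\phi$ at $0$ is $-\infty$ and no spectral-projection mechanism can restore integrability at the endpoint. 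What the representation does give for free is $\int_\lambda^\infty\phi'^2dx\le\lambda^{-1}\int_\lambda^\infty x\phi'^2dx<\infty$ for every $\lambda>\lambda_0$, so all of your conclusions --- $C^1$ regularity, absolute continuity of the derivative, and $\frac{d^2}{d\lambda^2}S^{\cbH}_\phi=\pi\phi'^2\ge 0$ a.e. --- hold on the open interval $(\lambda_0,\infty)$; only the closed-endpoint regularity is lost. (The paper's own passage from $\int_0^\infty xf\,dx<\infty$ to $f\in L^1(\mathbb R_+)$ in the proof of Theorem \ref{Sl} is exposed to the same example, so you are in good company, but you should either state your conclusions on the open interval or add a hypothesis at $\lambda_0$.)
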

\begin{proof}
With $\phi\in \cH$, we shall show that the function 
$\lambda\in\mathbb R\to S_\phi^{\cbH}(\lambda)$ 
is convex. We may assume that there exists $\lambda_0\in\mathbb R$ with $S_\phi^{\cbH}(\lambda) < \infty$ and show the convexity 
in $(\lambda_0, \infty)$ where 
$S_\phi^{\cbH}(\lambda)$ is finite by the monotonicity of the entropy. By \eqref{S0}, this true if $\phi\in C^\infty_0(\mathbb R)$. 
Set $\lambda_0 = 0$ for simplicity. 

Let $T = \sqrt{|\log\Delta_{H_0}|}E_-$ with $E_-$ the negative $\log\Delta_{H_0}$-spectral projection. By \eqref{Sfin} we have
\[
S_\phi^{\cbH}(0) < \infty \Leftrightarrow \phi\in {D}(T)\, .
\]
Choose a sequence $\phi_n\in C_0^\infty(\mathbb R)$, with supp$(\phi_n)\subset [0,\infty)$, such that $\phi_n \to \phi$ in the graph norm of $T$. 

Now, 
\[
S^{H_0}_\phi\geq
S^{H_\lambda}_\phi = S^{U(\lambda)H_0}_\phi = S^{H_0}_{U(-\lambda)\phi}\, ,\quad \lambda \geq 0\, ,
\]
therefore $U(-\lambda)$ maps $H_0$-finite entropy vectors into $H_0$-finite entropy vectors,
 thus $U(-\lambda){D}(T)\subset {D}(T)$, 
$\lambda \geq0$.
By Lemma \ref{UT}, also $U(-\lambda)\phi_n$ converges to $U(-\lambda)\phi$  in the graph norm of $T$. By \cite{CLR}, we have
\[
S_{\phi_n}^{\cbH}(\lambda)\to S_{\phi}^{\cbH}(\lambda)\, ,\quad \lambda \geq 0\, .
\]
By Prop. \ref{SS}, $S_{\phi_n}^{\cbH}(\lambda)$ is convex. As the pointwise limit of convex functions is convex, we conclude that 
 $S_{\phi}^{\cbH}(\lambda)$ is convex too. 
 \end{proof}
Let $\frak L_0$ be the subspace of $\frak L$ consisting of the tempered distributions $\phi\in\frak L$ such that $\phi'|_{(0,\infty)}$ is a Borel function such that
\[
\int_0^\infty x\phi'(x)^2dx < \infty\, ;
\]
so $\phi'(x)|_{(0,\infty)}\in L^2(\mathbb R_+, xdx)$. We define the real, positive, bilinear form $q$ on $\cH$ with domain $D(q) = \frak L_0$
\[
q(\phi,\psi) = \int_0^\infty x\phi'(x)\psi'(x)dx\, ,\quad \phi,\psi\in D(q)\, .
\]
We also write $q(\phi) = q(\phi,\phi)$ for the associated quadratic form. 
\begin{lemma}\label{qC}
The above form $q$ is closable. 
\end{lemma}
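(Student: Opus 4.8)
The plan is to verify the standard closability criterion for a densely defined, nonnegative symmetric form: $q$ is closable precisely when every sequence $\phi_n \in D(q) = \mathfrak{L}_0$ with $\phi_n \to 0$ in $\cH$ and $q(\phi_n - \phi_m) \to 0$ satisfies $q(\phi_n) \to 0$. (Density of $D(q)$, needed for the criterion, is immediate from the stated density of $C_0^\infty(\mathbb{R})$ in $\cH$.) Since $q(\phi_n-\phi_m) = \|\phi_n' - \phi_m'\|^2_{L^2(\mathbb{R}_+, x\,dx)}$, such a sequence has $\phi_n'|_{(0,\infty)}$ Cauchy in the complete space $L^2(\mathbb{R}_+, x\,dx)$; call its limit $g$. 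Everything then reduces to showing $g = 0$, for then $q(\phi_n) = \|\phi_n'\|^2 \to \|g\|^2 = 0$.

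To identify $g$, I would test it against real $\eta \in C_0^\infty((0,\infty))$. Since $\eta$ is supported away from the origin, $x \mapsto \eta(x)/\sqrt{x}$ lies in $L^2(dx)$, so $h \mapsto \int_0^\infty h\eta\,dx$ is continuous on $L^2(\mathbb{R}_+, x\,dx)$, whence $\int_0^\infty \phi_n'\eta\,dx \to \int_0^\infty g\eta\,dx$. On the other hand, the distributional derivative of $\phi_n$ on $(0,\infty)$ is the function $\phi_n'$, so integration by parts (no boundary terms, as $\eta$ has compact support in $(0,\infty)$) gives $\int_0^\infty\phi_n'\eta\,dx = -\langle\phi_n, \eta'\rangle$. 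Thus $\int_0^\infty g\eta\,dx = -\lim_n\langle\phi_n,\eta'\rangle$, and it suffices to show $\langle\phi_n,\eta'\rangle \to 0$.

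The crux is that $\psi \mapsto \langle\psi,\eta'\rangle$ is $\cH$-continuous. Using that each $\psi\in\mathfrak{L}$ is a real distribution with $\hat\psi(-p) = \overline{\hat\psi(p)}$ and $\|\psi\|_\cH^2 = \int_0^\infty p\,|\hat\psi(p)|^2\,dp$, Parseval gives, up to a harmless real constant, $\langle\psi,\eta'\rangle = \mathrm{const}\cdot\mathrm{Re}\int_0^\infty\hat\psi(p)\,\overline{\widehat{\eta'}(p)}\,dp$, and Cauchy--Schwarz with weight $p$ yields $|\langle\psi,\eta'\rangle| \le C_\eta\|\psi\|_\cH$ with $C_\eta^2 \propto \int_0^\infty |\widehat{\eta'}(p)|^2\, p^{-1}\,dp$. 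Because $\widehat{\eta'}(p) = ip\,\hat\eta(p)$, this weight integral equals $\int_0^\infty p\,|\hat\eta(p)|^2\,dp$, which is finite since $\hat\eta$ is Schwartz. Hence $\langle\phi_n,\eta'\rangle\to 0$, giving $\int_0^\infty g\eta\,dx = 0$ for all such $\eta$; as $g\in L^1_{\mathrm{loc}}((0,\infty))$ this forces $g=0$ a.e.\ on $(0,\infty)$, completing the argument.

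I expect the main obstacle to be exactly this continuity estimate, which must bridge two genuinely different pictures: the form $q$ lives in the position variable and sees only $(0,\infty)$, whereas the Hilbert norm is an $L^2$ norm with weight $p$ in the conjugate (momentum) variable. The content of the estimate is that the weight $p$ in $\cH$ exactly compensates the momentum factor produced by differentiating $\eta$, so that $\eta'$ defines a bounded functional; this is precisely why the argument survives the failure of $1/x$ to be integrable near the origin, which would obstruct a naive weighted--Dirichlet--form treatment in the position $L^2$ space. A more conceptual alternative would be to exhibit a closed extension of $q$ --- namely $\tfrac1\pi S^{\cbH}_{\,\cdot}(0)$, which is closed because the entropy is lower semicontinuous and has finiteness domain $D(\sqrt{|\log\Delta_{H_0}|\,E_-})$ by \eqref{Sfin} --- and invoke that any nonnegative form admitting a closed extension is closable; but checking $q \subseteq \tfrac1\pi S^{\cbH}_{\,\cdot}(0)$ on all of $\mathfrak{L}_0$ is itself essentially the extension of formula \eqref{S0} beyond $C_0^\infty$, so the direct computation above seems the more economical route.
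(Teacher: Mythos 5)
Your proof is correct and follows essentially the same route as the paper: verify the closability criterion, obtain the $L^2(\mathbb R_+,x\,dx)$-limit $g$ of $\phi_n'$ by completeness, and identify $g=0$ by pairing with test functions supported in $(0,\infty)$ and throwing the derivative onto the test function, where the momentum weight $p$ in $\|\cdot\|_{\cH}$ absorbs the factor produced by differentiation. If anything, your explicit Cauchy--Schwarz estimate $|\langle\psi,\eta'\rangle|\le C_\eta\|\psi\|_{\cH}$ with $C_\eta^2\propto\int_0^\infty p\,|\hat\eta(p)|^2\,dp$ makes precise a step the paper states somewhat loosely (convergence of $\int_0^\infty\hat\phi_n(p)\hat g(-p)\,dp$ for \emph{all} Schwartz $g$, which really requires $g$ to be a derivative of a function supported in $(0,\infty)$ so that $\hat g(-p)p^{-1/2}$ is square-integrable near $p=0$).
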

\begin{proof}
We have to show that for any sequence of vectors $\phi_n \in D(q)$ such that $\lim_{n\to\infty} \phi_n = 0$ in $\cH$
and $\lim_{n,m\to\infty}q(\phi_n - \phi_m) = 0$, we have 
$\lim_{n\to\infty}q(\phi_n) = 0$. Namely, if $\phi_n\in\frak L_0$,
\begin{align}
&||\phi_n||^2 =  \int_0^\infty p|\hat \phi_n(p)|^2 dp \to 0, \label{f1}\\
& \int_0^\infty x(\phi'_n(x) - \phi'_m(x))^2 dx \to 0\label{f2}\\
&\Rightarrow \int_0^\infty x\phi'_n(x)^2 dx \to 0\, . \label{f3}
\end{align}
Now, we have by
\eqref{f2} that  $\phi'_n$ is a Cauchy sequence in $L^2(\mathbb R_+, xdx)$ (that we identify with $L^2(\mathbb R, \chi_{\mathbb R_+}(x) x dx)$), so there exists $f\in L^2(\mathbb R_+, xdx)$ such that 
\begin{equation}\label{pL}
\phi'_n\to f \quad {\rm in}\ L^2(\mathbb R_+, xdx)\, .
\end{equation}

As $\phi_n$ is real, thus $|\hat\phi_n|$ is even,
eq. \eqref{f1} means that $\hat\phi_n$ converges to zero in $L^2(\mathbb R, |p|\,dp)$; in particular we have that
$
\int_{-\infty}^\infty p \hat\phi_n(p) \hat \psi(- p)dp \to 0
$
for all $\psi$ in the Schwartz space ${\cal S}(\mathbb R)$; so let $\psi=h'$ with $h\in {\cal S}(\mathbb R)$; then, taking Fourier transforms,
\begin{equation}\label{pL}
\int_{-\infty}^\infty \phi_n(x) h'(x)dx \to 0\, ,
\end{equation}
(integral in a distributional sense), thus
\[
\int_{-\infty}^\infty \phi'_n(x)  h(x)dx \to 0\, , \quad h\in {\cal S}(\mathbb R)\, .
\] 
By \eqref{pL}, we then have 
\[
\int_{-\infty}^\infty f(x)  h(x)dx = 0\, ,
\]
so $f =0$.  Thus \eqref{pL} implies \eqref{f3} and $q$ is closable. 
\end{proof}
\begin{corollary}\label{CorS}
Equation \eqref{S0} holds for all $\phi\in \frak L_0$. For every $\phi\in \cH$, we have
\[
S_\phi^{H_0} = \int_0^\infty x\phi'(x)^2dx
\]
where the right hand side is set equal to $+\infty$ if $\phi'$ does not belong to $L^2(\mathbb R_+, x\,dx)$. 
\end{corollary}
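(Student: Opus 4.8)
The plan is to reduce the full statement \eqref{S0} to the single identity at $\lambda=0$, namely $S^{H_0}_\phi=\pi\int_0^\infty x\,\phi'(x)^2\,dx$ for $\phi\in\frak L_0$, and then to read off the remaining assertions from it. The reduction is exactly as in Proposition \ref{SS}: by Lemma \ref{Lop}$(i)$ one has $S^{\cbH}_\phi(\lambda)=S^{H_0}_{U(-\lambda)\phi}$, and since $(U(-\lambda)\phi)(x)=\phi(x+\lambda)$ the operator $U(-\lambda)$ preserves $\frak L_0$ for $\lambda\ge 0$ and turns $\int_0^\infty x\,\phi'(x+\lambda)^2\,dx$ into $\int_\lambda^\infty (y-\lambda)\,\phi'(y)^2\,dy$. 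Thus everything follows once the $\lambda=0$ identity is established for every $\phi\in\frak L_0$, together with the statement that $S^{H_0}_\phi=+\infty$ whenever $\phi'\notin L^2(\mathbb R_+, x\,dx)$.

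I would then work with two closed structures on $\cH$ that agree on the core $C_0^\infty(\mathbb R)$: the entropy $\phi\mapsto S^{H_0}_\phi$, which by \eqref{Sfin} has form domain $D(T)$ with $T=\sqrt{|\log\Delta_{H_0}|}E_-$ and is continuous along $T$-graph-norm convergent sequences (\cite{CLR}); and the form $q$, which is closable by Lemma \ref{qC}. By Proposition \ref{SS} these coincide up to the factor $\pi$ on $C_0^\infty(\mathbb R)$, i.e.\ $S^{H_0}_\phi=\pi q(\phi)$ there. The goal is to prove the two matching inclusions $D(T)\subseteq\frak L_0$ and $\frak L_0\subseteq D(T)$, with $S^{H_0}_\phi=\pi q(\phi)$ throughout.

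The inclusion $D(T)\subseteq\frak L_0$ is the routine one. Given $\phi$ with $S^{H_0}_\phi<\infty$, hence $\phi\in D(T)$, pick $\phi_n\in C_0^\infty(\mathbb R)$ converging to $\phi$ in the $T$-graph norm, as in the proof of Lemma \ref{ShC}. Since $\{\phi_n\}$ is $T$-graph-norm Cauchy, the continuity of the entropy gives $S^{H_0}_{\phi_n-\phi_m}\to S^{H_0}_0=0$, so by Proposition \ref{SS} we have $\pi q(\phi_n-\phi_m)\to 0$; thus $\phi_n'$ is Cauchy in $L^2(\mathbb R_+, x\,dx)$ and converges to some $h$. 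Since also $\phi_n\to\phi$ in $\cH$, the argument inside the proof of Lemma \ref{qC} forces $h=\phi'$, so $\phi'\in L^2(\mathbb R_+, x\,dx)$, i.e.\ $\phi\in\frak L_0$, and the graph-norm continuity of the entropy gives $S^{H_0}_\phi=\lim_n\pi q(\phi_n)=\pi q(\phi)$. This already yields the case $\phi'\notin L^2(\mathbb R_+, x\,dx)$ of the last assertion: such a $\phi$ cannot lie in $D(T)$, so $S^{H_0}_\phi=+\infty$.

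The reverse inclusion $\frak L_0\subseteq D(T)$ is where the real work lies. Given $\phi\in\frak L_0$, I would construct $\psi_n\in C_0^\infty(\mathbb R)$ with $\psi_n\to\phi$ in $\cH$ and $\psi_n'\to\phi'$ in $L^2(\mathbb R_+, x\,dx)$. Granting this, Proposition \ref{SS} gives $S^{H_0}_{\psi_n}=\pi q(\psi_n)\to\pi q(\phi)<\infty$, so lower semicontinuity of the entropy yields $S^{H_0}_\phi\le\pi q(\phi)<\infty$; in particular $\phi\in D(T)$, and the inclusion already proved upgrades this bound to the equality $S^{H_0}_\phi=\pi q(\phi)$. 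The sequence $\psi_n$ is produced by first mollifying—$\phi*\rho_\delta\to\phi$ in $\cH$ because $\hat\rho_\delta\to 1$ boundedly, with $\phi'*\rho_\delta\to\phi'$ in $L^2(\mathbb R_+, x\,dx)$—and then cutting off at spatial infinity. The main obstacle is exactly this cutoff: writing $\psi=\chi_R\,(\phi*\rho_\delta)$, the Leibniz rule produces a cross term $\chi_R'\,(\phi*\rho_\delta)$ whose weighted energy $\int_0^\infty x\,\chi_R'(x)^2\,(\phi*\rho_\delta)(x)^2\,dx$ must be driven to $0$. Since a generic element of $\cH=L^2(\mathbb R_+, p\,dp)$ behaves like a homogeneous $\dot H^{1/2}$ function and need not lie in $L^2(dx)$, a cutoff of width $R$ does not suffice; I expect to need a logarithmically graded cutoff, with transition over $[R,R^k]$ and $|\chi_R'(x)|\le C\,(x\log R)^{-1}$, exploiting that $\phi'\in L^2(\mathbb R_+, x\,dx)$ controls the growth of $\phi$ only like $\sqrt{\log x}$ through Cauchy--Schwarz, so that the cross term is $O(1/\log R)$. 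Making this estimate while simultaneously keeping the $\cH$-convergence is the crux. Once it is in place, the two inclusions give $\frak L_0=D(T)$ and the $\lambda=0$ identity, and translation covariance together with the $+\infty$ convention yields both assertions of the corollary.
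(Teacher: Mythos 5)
Your overall strategy --- identify the entropy quadratic form with $\pi q$ by matching two closable forms on the common core $C^\infty_0(\mathbb R)$ --- is the same as the paper's, and your first inclusion $D(T)\subseteq\frak L_0$, with $S^{H_0}_\phi=\pi q(\phi)$ there, is essentially the paper's argument: $C^\infty_0(\mathbb R)$ is a form core for the entropy form because it is an operator core for $\sqrt{|\log\Delta_{H_0}|}E_-$ (\cite[Prop.~2.4]{CLR}), and the closability of $q$ (Lemma \ref{qC}) identifies the $L^2(\mathbb R_+,x\,dx)$-limit of $\phi_n'$ with $\phi'$. That part, together with the reduction to $\lambda=0$ by translation covariance and the $+\infty$ statement as the contrapositive of $D(T)\subseteq\frak L_0$, is sound.

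The gap is the reverse inclusion $\frak L_0\subseteq D(T)$, which you correctly identify as the crux but do not actually prove. What you need is that every $\phi\in\frak L_0$ admits $\psi_n\in C^\infty_0(\mathbb R)$ with $\psi_n\to\phi$ in $\cH$ and $\limsup_n q(\psi_n)\le q(\phi)$, and your mollification-plus-cutoff sketch leaves the decisive estimates open. Concretely: (a) the $\cH$-norm is of homogeneous $\dot H^{1/2}$ type, so even the convergence $\chi_R(\phi*\rho_\delta)\to\phi$ in $\cH$ is delicate for a cutoff, not only the weighted energy of the cross term; (b) with $|\chi_R'|\lesssim (x\log R)^{-1}$ over $[R,R^k]$ and the a priori bound $\phi(x)^2\lesssim q(\phi)\log x$, the cross term $\int_0^\infty x\,\chi_R'{}^2\phi^2\,dx$ comes out $O(1)$ in $R$ for fixed $k$; it becomes small only after splitting off the vanishing tail $\int_R^\infty x\phi'{}^2\,dx$ and then sending $k\to\infty$ diagonally, which you do not set up; (c) $q$ gives no control of $\phi$ on $(-\infty,0)$ nor of the admissible $\sqrt{\log(1/x)}$ blow-up at $0^+$, both of which the mollifier smears into the region the estimates must see. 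None of this is obviously fatal, but as written the proof of the hard inclusion does not exist. The paper avoids the construction entirely: it notes that the entropy form, being lower semicontinuous, is closable with form core $C^\infty_0(\mathbb R)$, that $q$ is closable by Lemma \ref{qC}, and that the two coincide on that common core, so their form closures must coincide --- the identification is inherited from the already-established operator-core statement rather than from a hand-built approximation of general elements of $\frak L_0$. If you keep your two-inclusion structure, the efficient repair is to argue at the level of closures of forms as the paper does, rather than by explicit cutoffs.
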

\begin{proof}
Let $\cD$ be the set of vectors $\phi\in\cH$ such that $S_\phi^H < \infty$.
With $h, k\in \cD$, we set
\[
S(h, k) = -\Im(h, P_H i\log\Delta_H k)   \, ,
\]
which is well defined by the real polarisation identity. Then $S(\cdot,\cdot)$ is a bilinear form on $\cD$ whose associated quadratic form is lower semicontinuous, hence closable. Moreover $C^\infty_0 (\mathbb R)$ is a form core for $S(\cdot,\cdot)$;
this follows because $C^\infty_0 (\mathbb R)$ is a core for the operator $\sqrt{\log\Delta_H} E_-$, see \cite[Prop. 2.4]{CLR} and  preceding discussion. 
Since $q$ and $S$ coincide on $\cD$, they must agree on the form closure $\cD$ by Lemma \ref{qC}. 
\end{proof}
\begin{theorem}\label{STh}
Proposition \ref{SS} holds true for all $\phi\in\cH$. 
\end{theorem}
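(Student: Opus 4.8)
The plan is to deduce all three formulas of Proposition~\ref{SS} from the $\lambda=0$ identity already established for arbitrary vectors in Corollary~\ref{CorS}, recovering the full $\lambda$-dependence by translation covariance and then producing the derivatives by elementary real analysis. Throughout I work in the Schr\"odinger realisation of Section~\ref{App}, identifying $\cH$ with $\frak L$ and writing $H_\lambda = U(\lambda)H_0$, $f := \phi'^2 \geq 0$.

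First I would prove \eqref{S0} for every $\phi\in\cH$. By Corollary~\ref{CorS} the $\lambda=0$ instance of \eqref{S0} holds for all $\psi\in\cH$, namely $S^{H_0}_\psi = \pi\int_0^\infty x\,\psi'(x)^2\,dx$, read as $+\infty$ when $\psi'|_{(0,\infty)}\notin L^2(\mathbb R_+,x\,dx)$: if $\psi\in\frak L_0$ this is the first assertion of the corollary, and otherwise both sides are $+\infty$ by its second assertion. Applying this to $\psi = U(-\lambda)\phi$ and using Lemma~\ref{Lop}$(i)$ with the unitary $U(\lambda)$, for which $U(\lambda)H_0 = H_\lambda$, gives $S^{\cbH}_\phi(\lambda) = S^{H_\lambda}_\phi = S^{H_0}_{U(-\lambda)\phi}$. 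Since $(U(-\lambda)\phi)(x) = \phi(x+\lambda)$ has distributional derivative $\phi'(\,\cdot\,+\lambda)$, the change of variable $y=x+\lambda$ yields
\[
S^{\cbH}_\phi(\lambda) = \pi\!\int_0^\infty x\,\phi'(x+\lambda)^2\,dx = \pi\!\int_\lambda^{+\infty}(y-\lambda)\,\phi'(y)^2\,dy\,,
\]
which is \eqref{S0}, valid for all $\phi\in\cH$ and all $\lambda\in\mathbb R$ as an identity in $[0,+\infty]$.

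With \eqref{S0} established, \eqref{S1} and \eqref{S2} reduce to differentiating $F(\lambda):=\pi\int_\lambda^\infty (x-\lambda)f(x)\,dx$. If $S^{\cbH}_\phi(\lambda_0)<\infty$ for some $\lambda_0$, then $F$ is non-increasing and finite on $[\lambda_0,\infty)$ by monotonicity, and for each $\lambda>\lambda_0$ and any $\lambda_1\in(\lambda_0,\lambda)$ the estimate $\int_\lambda^\infty f \leq (\lambda-\lambda_1)^{-1}\int_\lambda^\infty (x-\lambda_1)f \leq (\pi(\lambda-\lambda_1))^{-1}F(\lambda_1)<\infty$ shows $f\in L^1((\lambda,\infty))$. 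A direct difference-quotient computation, the same as in the proof of Theorem~\ref{Sl}, then gives $F'(\lambda) = -\pi\int_\lambda^\infty f(x)\,dx$, the boundary contribution $\tfrac1h\int_\lambda^{\lambda+h}(x-\lambda)f$ vanishing as $h\to0$ since it is dominated by $\int_\lambda^{\lambda+h} f\to0$; this is \eqref{S1}. Finally $g(\lambda):=\pi\int_\lambda^\infty f$ is the tail integral of an $L^1$ function, hence absolutely continuous with $g'(\lambda)=-\pi f(\lambda)$ for almost every $\lambda$ by the Lebesgue differentiation theorem, so $F'=-g$ is absolutely continuous and $F''(\lambda)=\pi f(\lambda)=\pi\phi'(\lambda)^2$ almost everywhere, which is \eqref{S2}.

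The delicate point I would treat with care is the regularity at the finiteness threshold: a general $\phi\in\cH$ with finite entropy satisfies only $\int_{\lambda_0}^\infty x\,\phi'^2<\infty$, which does \emph{not} force $\phi'^2$ to be integrable up to $\lambda_0$, so $f$ is guaranteed integrable only on $(\lambda_0,\infty)$. Consequently \eqref{S1} must be read as a genuine derivative on the open interval $(\lambda_0,\infty)$, possibly with infinite slope at the endpoint, and \eqref{S2} holds almost everywhere. This is precisely the regularity recorded in Theorem~\ref{Sl}$(i)$--$(ii)$, and no stronger statement can be expected for non-smooth $\phi$. Since everything rests on Corollary~\ref{CorS}, the substantive analytic work — the closability of the form $q$ and the fact that $C_0^\infty(\mathbb R)$ is a form core — has already been carried out there, and the present argument merely transports that identity along the translation flow and differentiates.
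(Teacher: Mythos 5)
Your proof is correct and follows essentially the same route as the paper: the paper's own argument is precisely ``by Corollary~\ref{CorS} we immediately get $S^{H_\lambda}_\phi=\pi\int_\lambda^\infty(x-\lambda)\phi'(x)^2\,dx$, hence \eqref{S1}, \eqref{S2} follow by differentiation,'' with the translation covariance step implicit. You merely supply the details the paper omits, and your caveat about the endpoint $\lambda_0$ (where $\phi'^2$ need not be locally integrable, so \eqref{S1} is a genuine derivative only on the open interval) is a legitimate refinement of what the paper states.
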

\begin{proof}
By Cor. \ref{CorS}, we immediately get
\[
S_\phi^{H_\lambda} = \int_\lambda^\infty (x-\lambda)\phi'(x)^2dx\, ,
\]
hence \eqref{S1}, \eqref{S2} follow by differentiation and clearly entail the theorem. 
\end{proof}
The following elementary lemma was needed.  
\begin{lemma}\label{UT}
Let $\cH$ be a Hilbert space, $T : {D}(T)\subset \cH\to \cH$ a closed linear operator. If $U\in B(\cH)$ maps ${D}(T)$ into itself, then $U|_{{D}(T)}$ is bounded operator on ${D}(T)$ with the graph norm of $T$. 
\end{lemma}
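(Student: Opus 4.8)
The plan is to invoke the closed graph theorem. First I would observe that, since $T$ is closed, the domain $D(T)$ equipped with the graph norm $\|x\|_T = \|x\| + \|Tx\|$ is itself a Banach space. The hypothesis that $U$ maps $D(T)$ into itself then makes $U|_{D(T)}$ a well-defined linear map from the Banach space $(D(T), \|\cdot\|_T)$ into itself, and the assertion to be proved is precisely that this map is bounded for the graph norm.

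By the closed graph theorem, it suffices to verify that $U|_{D(T)}$ has closed graph with respect to $\|\cdot\|_T$. So I would take a sequence $x_n \in D(T)$ with $x_n \to x$ and $U x_n \to y$, both convergences understood in the graph norm, and show $y = U x$. Graph-norm convergence $x_n \to x$ entails in particular $x_n \to x$ in $\cH$; since $U \in B(\cH)$ is bounded on the full Hilbert space, this gives $U x_n \to U x$ in $\cH$. On the other hand, $U x_n \to y$ in the graph norm entails $U x_n \to y$ in $\cH$ as well. By uniqueness of limits in $\cH$ we conclude $y = U x$, so the graph is closed and the conclusion follows.

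There is no serious obstacle here; the statement is an essentially routine application of the closed graph theorem. The only points requiring a little care are that the domain and codomain are genuinely the same Banach space $(D(T), \|\cdot\|_T)$, which rests on $T$ being closed together with the hypothesis $U(D(T)) \subset D(T)$, and that graph-norm convergence is unpacked correctly into convergence of both components in $\cH$. The boundedness of $U$ on all of $\cH$ is exactly what allows one to identify the limit $y$, after which closedness of the graph is immediate.
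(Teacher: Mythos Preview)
Your proof is correct and follows essentially the same route as the paper's: both arguments appeal to the closed graph theorem on the Banach space $(D(T),\|\cdot\|_T)$ and use the boundedness of $U$ on $\cH$ together with the domination $\|\cdot\|\leq\|\cdot\|_T$ to identify the limit. The only cosmetic difference is that the paper phrases the verification as ``closability'' (checking the special case $\xi_n\to 0$), whereas you verify closedness of the graph directly; for an everywhere-defined linear map between Banach spaces these are equivalent by the linearity reduction $x_n\mapsto x_n-x$.
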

\begin{proof}
By the closed graph theorem, it suffices to show that $U|_{{D}(T)}$ is closable with respect to the graph norm, namely that
\[
\xi_n,\eta \in {D}(T), \ ||\xi_n||+ ||T\xi_n||\to 0\, ,
 \  ||U\xi_n- \eta||+ ||TU\xi_n- T\eta||\to 0 \Rightarrow \eta = 0\, ,
\]
that holds true because $U$ is bounded, so $\xi_n\to 0$ implies $U\xi_n \to \eta =0$. 
\end{proof}

\subsection{Algebras and states associated with a one-particle structure}\label{QF1}
We now recall the Weyl algebra and the one-particle structure associated with a quasi-free state. 

Let $S$ be a symplectic space, that is $S$ a real linear space and $\sigma$ is a non-degenerate symplectic form on $S$; thus $\sigma$ is a real, bilinear, anti-symmetric form on $S\times S$. 
Given a real scalar product $\mu$  on $S$ satisfying the inequality 
\[
 \sigma(f_1,f_2)^2\leq   \mu(f_1,f_1)\cdot \mu(f_2,f_2) \ , \qquad f_1,f_2\in S\, , 
\]
 a \emph{one-particle structure} $(\cH_\mu,\kappa_\mu)$  on 
$S$ \cite{KayWa91} is given by a complex Hilbert space $\cH_\mu$ and  a real linear mapping $\kappa_\mu: S\to\cH_\mu$ satisfying
\begin{enumerate}
\item $\kappa_\mu(S)+i\kappa_\mu(S)$ is dense in $\cH_\mu$\, ,
\item $\Re(\kappa_\mu(f_1),\kappa(f_2))= \mu(f_1,f_2)$ and $\Im(\kappa(f_1),\kappa(f_2))= \sigma(f_1,f_2)$\, .
\end{enumerate} 
A one-particle structure is unique, modulo unitary equivalence \cite{Kay1985u}. 

Now, given a Hilbert space $\cH$,
we  consider  the Bose-Fock space 
$\Gamma(\cH):=\bigoplus^\infty_{k=0} \cH^{\otimes^k_s}$,  
where $\cH_0\equiv\mathbb C\xi$ is the one-dimensional Hilbert space of a unit vector $\xi$, the \emph{vacuum vector}, and $\cH^{\otimes^n_s}$ is the symmetric $n$-fold tensor product of $\cH$. To any $\phi\in\cH$
there corresponds a \emph{coherent vector} $e^\phi := \bigoplus^\infty_{n=0} \frac{1}{\sqrt{n!}} \phi^{\otimes^n_s}$ on $\Gamma(\cH)$,
where the zeroth component of $e^\phi$ is $\xi$. Coherent vectors form  a total family of linearly independent vectors of $\Gamma(\cH)$, whose scalar product verifies
$(e^\phi,e^\psi)= e^{(\phi,\psi)}$, for every  $\phi,\psi\in\cH$.

The {\it Weyl unitaries} on  $\Gamma(\cH)$ are first defined on coherent vectors by
\[
 W(\psi) e^\phi:=  e^{\psi +\phi}\cdot e^{-\frac{1}{2}(\psi,\psi)-(\psi,\phi)} \ , \qquad \psi,\phi\in\cH\ ,
\]
and then extended by linearity and density to all the Fock space. They satisfy  
the \emph{Weyl commutation relations}
\[
W(\psi)\, W(\phi)= e^{-i\Im(\psi,\phi)} W(\psi+\phi) \ ,  \qquad \psi,\phi\in\cH \ . 
\]
The \emph{Weyl algebra} $\cA(S)$ associated with the symplectic space $(S,\sigma)$ can be defined as the abstract $\rC^*-algebra$ generated the Weyl relations, but we shall directly deal with its representation associated with a one-particle structure. Given a one-particle structure $(\cH_\mu,\kappa_\mu)$ as above, let $\cA(S)$ be the $\rC^*$-algebra obtained as the norm closure of the $*$-algebra generated by the Weyl operators $W_\mu(\kappa_\mu(f))$ as $f$ varies in $S$, where $W_\mu$ denotes the Weyl unitaries on $\Gamma(\cH_\mu)$. 

The \emph{(quasi-free) state} $\varphi_\mu$ of $\cA(S)$ associated with $\mu$ is determined by 
\[
\omega_\mu(W_\mu(\kappa_\mu(f))) =  (\xi_\mu, W_\mu(\kappa_\mu(f))\xi_\mu) = e^{-\frac12 ||\kappa_\mu(f)||^2}= e^{-\frac12 \mu(f,f)}\ , \qquad f\in S\, ,
\]
with $\xi_\mu$ the vacuum vector of $\Gamma(\cH_\mu)$. Clearly, it is a normal state of the von Neumann algebra generated by $\cA(S)$. 

Note that any real linear, invertible map  $T: S\to S$, that preserves $\sigma$ and $\mu$,
promotes to a unitary on $\cH_\mu$, hence to a unitary  $u_{\mu,T}$ on the Fock space $\Gamma(\cH_\mu)$ which satisfies 
\[
u_{\mu,T}\,  W_\mu(\kappa_\mu(f)) \, u^*_{\mu,T}= W_\mu(\kappa_\mu(T f))\, ;
\]
so $u_{\mu,T}$ implements a vacuum preserving automorphism of $\cA(S)$. 

\section{Relative entropy and globally hyperbolic spacetimes}
\label{relspacetime}
We now make our analysis in the context of globally hyperbolic spacetimes admitting 
a Killing flow that is timelike and complete within certain causally convex subregions.  
We first recall some basic facts and then introduce 
admissible regions so to have, at the second quantisation level,  
half-sided modular inclusions of von Neumann algebras. 
We then show that, in the real scalar Klein-Gordon field case, the one-parameter family of relative entropy associated with a coherent state is convex. 

We start with a brief description of the causal structure, we refer to \cite{DMP17} for a detailed account \cite{EH73,Wa84,MTW} and for standard textbooks. 
Let $\cM$ be a 4-dimensional \emph{spacetime} i.e.\ a connected, time-oriented, Lorentzian manifold. For any subset $A\subset\cM$,   
the symbols $I^+(A)$, $J^+(A)$ and $D^+(A)$ denote  the \emph{chronological future},  the \emph{causal  future} 
and the \emph{future domain of dependence} of $A$. The corresponding past notions will be denoted
by replacing $+$ with $-$. 

A subset $A$ of $\cM$ is \emph{achronal}  $I^+(A)\cap A =\emptyset$. The \emph{edge} of an achronal set $A$ 
is the set of the points $p$ in the closure $\overline{A}$ such that for any neighbourhood $U$ of $p$
there are $p_\pm\in I^\pm(p,U)$  joined by a timelike curve contained in $U$ not intersecting $A$; here $I^\pm(p,U)$ is the set of points joined by a \emph{future/past-directed} (f-d/p-d) timelike curve starting from $p$ and contained in $U$.  
An achronal subset $S$ such that  $D^+(S)\cup D^-(S)=\cM$ is called a \emph{Cauchy surface of $\cM$}. 
A spacetime $\cM$ is \emph{globally hyperbolic} if one of the following equivalent properties is verified:

$(i)$  $\cM$ admits a Cauchy surface;

$(ii)$ $\cM$ admits a spacelike Cauchy surface;

$(iii)$ the collection  $\{I^+(p)\cap I^-(q) \ , \ p,q\in\cM\}$ is  a base of neighbourhood 
for the topology of $\cM$ and  the sets $J^+(p)\cap J^-(q)$ are compact for any pair of points $p,q\in \cM$.
  
\noindent 
A subset $A\subset \cM$ is \emph{causally convex} whenever $J^+(A)\cap J^-(A)=A$. 
The \emph{causal convex hull} $\ch(S)$ of the set $S$ is the smallest causally convex set containing $S$. Clearly  
\[
\ch(S)= J^+(S)\cap J^-(S)  \ , \qquad S\subset\cM.
\]
By property $(iii)$, any open causally convex subset of a globally hyperbolic spacetime $\cM$ is itself a globally hyperbolic spacetime.

\subsection{Wedges and strips}  
\label{relspacetime:B}
 Generalisation of wedge shaped regions of the Minkowski spacetime to curved spacetimes have been studied by several authors, see \cite{DLM} and references therein. 
Here, we give notions more suited to our aims. 

Within this section  we consider a connected globally hyperbolic spacetime $\cM$ endowed 
with a Killing flow $\Lambda$, i.e. a one-parameter group of isometries.
\begin{definition}
\label{wedge}
A \emph{wedge} of $\cM$ associated with the Killing flow $\Lambda$ is an open connected, causally convex subset $\cW$ of $\cM$ such that:

$(i)$ $\Lambda_s: \cW\to \cW$ is a diffeomorphism, $s\in \mathbb R$,

$(ii)$ $\Lambda$ is timelike, time oriented and complete within $\cW$,

\noindent
where time oriented means that the Killing vector field generating $\Lambda$ is either 
f-d for all points of $\cW$ or p-d for all points of $\cW$.
\end{definition} 
\noindent
Note that, since a wedge $\cW$ is open and causally convex, $\cW$ is itself a globally hyperbolic spacetime 
with the induced Lorentzian structure; as $\Lambda$ is assumed to be timelike complete, $\cW$ is indeed a globally hyperbolic \emph{stationary} spacetime.  

We set 
\[
 \rO_A:=\big\{\Lambda_s (p) \ : \ p\in A, \ s\in\bR \big\} \, , 
\]
for the \emph{orbit} of the set $A\subset \cW$. 
We also consider the \emph{positive/negative} half orbit  
$\rO^\pm_A:=\{\Lambda_s (p) \ : \ p\in\ A, \ \pm s> 0\}$. 
\begin{lemma}
\label{equiv-wedge}
Let $\cW$ be a wedge of $\cM$ associated with $\Lambda$. For any $p\in \cW$, 
\[
 \cW= \ch(\rO_p)=I^-(\rO_p)\cap I^+(\rO_p) \, . 
\]
\end{lemma}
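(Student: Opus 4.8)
The plan is to prove the set equality $\cW = \ch(\rO_p) = I^-(\rO_p)\cap I^+(\rO_p)$ by establishing the two equalities separately, working with the chronological characterization throughout. Since $\ch(\rO_p) = J^+(\rO_p)\cap J^-(\rO_p)$ by the definition of causal convex hull recalled earlier, the first task is to show that for a wedge the causal and chronological hulls of the orbit agree, and the second is to show this common set is exactly $\cW$.

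First I would prove $\cW \subset I^-(\rO_p)\cap I^+(\rO_p)$. Fix $q\in\cW$. Because $\Lambda$ is timelike, time-oriented and complete within $\cW$ (Definition \ref{wedge}(ii)), the orbit $\rO_p$ is a complete timelike curve through $p$ lying in $\cW$, and by the time-orientation hypothesis its tangent Killing field is, say, everywhere future-directed on $\cW$. The key geometric input is that a complete timelike orbit is both past- and future-inextendible in the globally hyperbolic spacetime $\cW$, so it must meet every Cauchy surface of $\cW$; equivalently, $\rO_p$ enters both $I^+(q)$ and $I^-(q)$ for any $q\in\cW$. Translating this, one gets points $\Lambda_{s_+}(p)\in I^+(q)$ and $\Lambda_{s_-}(p)\in I^-(q)$, i.e. $q\in I^-(\rO_p)\cap I^+(\rO_p)$. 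I would argue the orbit meets $I^\pm(q)$ by using completeness to push $\Lambda_s(p)$ arbitrarily far to the future/past along the timelike flow and invoking that in a globally hyperbolic spacetime an inextendible timelike curve is not imprisoned, hence eventually lies in the chronological future (resp. past) of any given point of the same globally hyperbolic region $\cW$.

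Next I would prove the reverse inclusions $I^-(\rO_p)\cap I^+(\rO_p)\subset \cW$ and, en route, the intermediate set $\ch(\rO_p)$ is sandwiched. For the reverse inclusion, take $q$ with $q\in I^+(\rO_p)$ and $q\in I^-(\rO_p)$, so there exist $s_1,s_2$ with $\Lambda_{s_1}(p)\ll q \ll \Lambda_{s_2}(p)$. Both endpoints lie in $\cW$, and since $\cW$ is causally convex (part of being a wedge) and $q$ lies causally between two points of $\cW$, causal convexity $J^+(\cW)\cap J^-(\cW)=\cW$ forces $q\in\cW$. This also shows $J^+(\rO_p)\cap J^-(\rO_p)\subset \cW$ by the same causal-convexity argument applied with $\le$ in place of $\ll$. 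Combining with the chain $I^-(\rO_p)\cap I^+(\rO_p)\subset J^-(\rO_p)\cap J^+(\rO_p)=\ch(\rO_p)$ and the first inclusion $\cW\subset I^-(\rO_p)\cap I^+(\rO_p)$, all three sets are squeezed between one another and the equalities follow.

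The step I expect to be the main obstacle is the forward inclusion $\cW\subset I^-(\rO_p)\cap I^+(\rO_p)$, specifically the claim that the complete timelike orbit $\rO_p$ necessarily reaches the chronological future and past of an arbitrary $q\in\cW$. This is where global hyperbolicity of $\cW$ is essential: I would make it rigorous by noting that $\cW$ is itself globally hyperbolic (remarked after Definition \ref{wedge}), choosing a Cauchy time function $\tau$ on $\cW$, and showing that along the complete timelike orbit $\tau(\Lambda_s(p))\to \pm\infty$ as $s\to\pm\infty$; inextendibility plus completeness prevents the orbit from accumulating at a finite Cauchy time, so it sweeps through all Cauchy surfaces and in particular enters $I^\pm(q)$. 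The care needed is to rule out the orbit being future/past trapped in a compact set, which is excluded precisely because timelike completeness together with global hyperbolicity forbids imprisonment of inextendible causal curves.
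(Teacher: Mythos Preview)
The genuine gap is in your forward inclusion $\cW\subset I^-(\rO_p)\cap I^+(\rO_p)$. You invoke the principle that in a globally hyperbolic spacetime an inextendible timelike curve ``eventually lies in the chronological future (resp.\ past) of any given point'', and later rephrase this as ``it sweeps through all Cauchy surfaces and in particular enters $I^\pm(q)$''. Neither implication holds: meeting every Cauchy surface is strictly weaker than entering $I^+(q)$ for every $q$. Global de Sitter space is globally hyperbolic, and the comoving worldline through the north pole is a complete inextendible timelike curve that meets every Cauchy slice, yet it never enters the chronological future of the south pole. Your argument uses only global hyperbolicity and completeness of the orbit; it never exploits that $\Lambda$ is a \emph{Killing} flow, and that is precisely what is needed here. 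The paper uses stationarity explicitly: it passes to the product $\bR\times\cC\cong\cW$ (with $\cC$ a spacelike Cauchy surface through $p$) and the time-independent metric form $-\beta(x)\,dt^2+2\omega_x\,dt+h_x$, then for an arbitrary $q\in\cC$ takes a spacelike curve $\alpha:[0,1]\to\cC$ from $q$ to $p$ and chooses an increasing $f:[0,1]\to(0,\infty)$ so that $t\mapsto(f(t),\alpha(t))$ is future-directed timelike, producing a timelike curve from $q$ to $\Lambda_{f(1)}(p)\in\rO_p$. This construction uses connectedness of $\cC$ and time-independence of the metric coefficients; your Cauchy-time argument invokes neither and cannot be completed without an input of this kind.

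Your sandwich route to the remaining equality is, by contrast, a legitimate alternative to the paper's direct argument: once the forward inclusion is repaired, the chain $\cW\subset I^-(\rO_p)\cap I^+(\rO_p)\subset J^-(\rO_p)\cap J^+(\rO_p)=\ch(\rO_p)\subset\cW$ closes up and gives all three equalities at once. The paper instead proves $\ch(\rO_p)=I^-(\rO_p)\cap I^+(\rO_p)$ separately, by observing that $I^+(\rO_p)\cap I^-(\rO_p)$ is causally convex and, for the reverse inclusion, pushing parameters $s_1<s_2$ out to $s_3<s_1<s_2<s_4$ so as to land in $J^+\cap J^-$.
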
 
\begin{proof}  
Without loss of generality we assume that $\Lambda$ is f-d. 
We prove that $I^-(\rO_p)\cap I^+(\rO_p)=\ch(\rO_p)$. 
To this end  note that $I^+(\rO_p)\cap I^-(\rO_p)$ is causally convex 
(this easily follows by the properties of causal sets).   
Hence  $\ch(\rO_p)\subset I^+(\rO_p)\cap I^-(\rO_p)$. Conversely,
given $x\in I^+(\rO_p)\cap I^-(\rO_p)$ there are $s_1,s_2\in\bR$ such that\ $s_1<s_2$ 
and $x\in I^+(\Lambda_{s_1}(p))\cap I^-(\Lambda_{s_2}(p))$. Taking $s_3<s_1<s_2<s_4$,
we have that 
\[
x\in I^+(\Lambda_{s_1}(p))\cap I^-(\Lambda_{s_2}(p))\subset J^+(\Lambda_{s_3}(p))\cap J^-(\Lambda_{s_4}(p))
\subset \ch(\rO_p) \, .
\]
So $I^-(\rO_p)\cap I^+(\rO_p)=\ch(\rO_p)$ and $I^-(\rO_p)\cap I^+(\rO_p)\subset \cW$ because $\cW$ is causally convex. 
On the other hand, as $\cW$ is a globally hyperbolic stationary spacetime with respect to $\Lambda$, we have 
\[
 \cW= I^-(\rO_p, \cW)= I^+(\rO_p, \cW) \ .
\]
In fact, taking  a smooth spacelike Cauchy surface $\cC$ containing $p$,  
the mapping $\bR\times \cC \ni(s,x)\mapsto \Lambda_s(x)\in \cW $ 
is an isometry if $\bR\times \cC$ is equipped with the induced metric 
\[
g_{(t,x)}= - \beta(x)dt^2 + 2\omega_x dt  + h_x \ , \qquad (t,x)\in \bR\times \cC \ , 
\] 
where $\omega$ is a 1-form,  $\beta$ is a smooth strictly positive function, 
$h$ is a Riemannian metric over $\cC$ \cite[Theorem 2.3]{CFS}.
Then, for any point $q$ in $\cC$ 
take a spacelike curve $\alpha:[0,1]\to \cC$ with $\alpha(0)=q$ and $\alpha(1)=p$. 
As $\beta$ is positive and depends only on $\cC$, and as $\Lambda$ is complete, by continuity and compactness
of $[0,1]$ one can always find a smooth increasing function $f:[0,1]\to [0,+\infty)$ such that 
the curve $\rho(t):=(f(t),\alpha(t))$ is f-d timelike.
Clearly  $\rho(0)=(0,q)$ and $\rho(1)=\Lambda_{f(1)}(p)=
(f(1),p)$. So  $q\in I^-(\rO_p,\cW)$ and as this holds for any point of $\cC$ we have that 
$I^-(\rO_p,\cW)=\cW$. Therefore, 
\[
 \cW= I^-(\rO_p, \cW)\cap I^+(\rO_p, \cW)\subset I^-(\rO_p)\cap I^+(\rO_p)\subset \cW
\]
completing the proof.   
\end{proof} 
Given a wedge $\cW$ of $\cM$ associated with $\Lambda$, we say that an  open connected 
subset $\cV\subset \cW$ is (positively) \emph{half-invariant} w.r.t.\ $\Lambda$ 
whenever 
\[
 \Lambda_s(\cV)\subset \cV \, , \qquad \forall\, s>0 \, .
\]   
Note that, given a $\Lambda$ half-invariant  subset $\cV$,  then  
$\Lambda_s(\cV)$ is  half-invariant  too, $s >0$. 

We are going to consider two kind of  half-invariant  subregions of a wedge $\cW$ of $\cM$. 
Let  $\tau$ be an isometry of $\cM$. The image $\tau(\cW)\equiv\cW^\tau$ is a wedge with the respect 
to the one-parameter group 
$ \mathrm{ad}_\tau(\Lambda):= \tau\circ \Lambda\circ \tau^{-1}$; if $
\cW^\tau \subset \cW
$, we shall say that $\cW^\tau$ is a \emph{subwedge} of $\cW$.  

Now, $\cW^\tau$ is $\Lambda$ half-invariant whenever $\Lambda_s(\tau(\cW))\subset \tau(\cW)$ for $s>0$. By
Lemma \eqref{equiv-wedge}
\[
\cW^\tau=\ch\big(\{\mathrm{ad}_\tau(\Lambda_s)(\tau(p)) , \ s\in\bR \}\big)= \ch\big(\{\tau\circ \Lambda_s(p), \ s\in\bR \}\big)= 
\tau\big(\ch(\rO_p)\big) 
\]
for any $p\in \cW$, so we have
\[
(\Lambda_t\circ \tau) (\rO_{p}) \subset\tau(\ch(\rO_p)) \iff \mathrm{ad}_{\tau^{-1}}\circ\Lambda_t (\rO_{p}) \subset\ch(\rO_p)   \, . 
\]
Even though the spacetime has no symmetries besides $\Lambda$, there are always proper half-invariant subregions of $\cW$ that, in general, are neither wedges nor causally convex. We now give an interesting family of such regions.
\begin{definition}
\label{strip}
Given a connected achronal subset $A$ of $\cW$ with $A\cap\mathrm{edge}(A)=\emptyset$, the half-orbit  
$\rO^+_A$
is called the (positive) \emph{strip} generated by $A$. 
\end{definition}
\begin{lemma}
\label{strip:open}
Let $A\subset \cW$ be a connected achronal such that $A\cap \mathrm{edge}(A)=\emptyset$. 
Then $\rO^+(A)$
is an open connected half-invariant subset of $\cW$.
\end{lemma}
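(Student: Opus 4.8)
The plan is to verify the three asserted properties separately, with openness the only substantial one. For half-invariance I would simply use the group law $\Lambda_t\circ\Lambda_s=\Lambda_{t+s}$: for any $t>0$ one has $\Lambda_t(\rO^+_A)=\{\Lambda_{t+s}(p):p\in A,\ s>0\}\subset\rO^+_A$, since $t+s>0$. For connectedness I would observe that $\rO^+_A$ is the image of $(0,\infty)\times A$ under the continuous flow map $(s,p)\mapsto\Lambda_s(p)$; as $(0,\infty)$ and $A$ are connected, so is their product, and hence so is the image.

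For openness, the first step is to record that, since $A$ is achronal and $A\cap\mathrm{edge}(A)=\emptyset$, the set $A=A\setminus\mathrm{edge}(A)$ is a topological ($C^0$-embedded) hypersurface of $\cM$ (see e.g.\ \cite{Wa84,EH73}); in particular $A$ is a $3$-dimensional topological manifold without boundary, so that $\mathbb R\times A$ is a $4$-dimensional one. I would then consider the continuous map $\Phi:\mathbb R\times A\to\cW$, $\Phi(s,p)=\Lambda_s(p)$, which does land in $\cW$ because $\Lambda_s(\cW)=\cW$ by the wedge axioms and $A\subset\cW$.

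The crux is to show that $\Phi$ is injective. Suppose $\Lambda_s(p)=\Lambda_{s'}(p')$ with $(s,p),(s',p')\in\mathbb R\times A$. If $s\neq s'$ then $p'=\Lambda_{s-s'}(p)$ lies on the $\Lambda$-orbit of $p$; since the Killing field is timelike and time-oriented on $\cW$, this orbit segment is a timelike curve, and as the globally hyperbolic spacetime $\cW$ contains no closed causal curve one has $p'\neq p$. Hence $p$ and $p'$ would be chronologically related while both lying in $A$, contradicting achronality. Therefore $s=s'$, and then $p=p'$ because $\Lambda_s$ is a diffeomorphism. With $\Phi$ a continuous injection between topological $4$-manifolds without boundary, invariance of domain gives that $\Phi$ is an open map; consequently $\rO^+_A=\Phi\big((0,\infty)\times A\big)$ is open in $\cW$, and hence in $\cM$.

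I expect the main obstacle to be precisely this openness argument: one must first exploit the hypothesis $A\cap\mathrm{edge}(A)=\emptyset$ to know that $A$ is a topological hypersurface, so that $\mathbb R\times A$ and $\cW$ have the same topological dimension and invariance of domain applies, and then establish \emph{global} injectivity of the flowout from the timelike, time-oriented character of the orbits together with the absence of closed causal curves guaranteed by global hyperbolicity. The remaining bookkeeping (continuity of the flow, the product-connectedness, the group-law inclusion) is routine.
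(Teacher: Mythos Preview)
Your proof is correct, and the overall strategy---invariance of domain for openness, with connectedness and half-invariance essentially trivial---matches the paper's. The execution differs, however, in a way worth noting.

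The paper argues openness \emph{locally}: given $p=\Lambda_{s_*}(q)\in\rO^+_A$, it first passes to a relatively compact piece $U$ of the translated leaf $A_{s_*}$ around $q$, then invokes the Bernal--S\'anchez extension theorem \cite{BS06} to enlarge $\overline U$ to an achronal Cauchy surface $\cC$ of $\cW$, and finally uses that the flow map $\bR\times\cC\to\cW$ is a continuous bijection (each inextendible timelike curve meets a Cauchy surface exactly once, \cite{BS03}), hence a homeomorphism by invariance of domain; this produces an open flow-box $\{\Lambda_s(U):|s-s_*|<\eps\}$ around $p$. You instead argue \emph{globally}: the hypothesis $A\cap\mathrm{edge}(A)=\emptyset$ already makes $A$ a $C^0$-hypersurface, so $\bR\times A$ is a topological $4$-manifold, the flowout $\Phi:\bR\times A\to\cW$ is injective by achronality plus the timelike character of the orbits, and invariance of domain applies directly to $\Phi$.

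Your route is shorter and avoids both the Cauchy-surface extension and the auxiliary homeomorphism; the paper's route has the minor practical advantage of working with a relatively compact piece of $A$ at a time (so one never needs the global hypersurface structure of $A$), but both are valid. Your half-invariance argument via the group law is also more direct than the paper's, which first establishes the pairwise disjointness $\Lambda_{s_1}(A)\cap\Lambda_{s_2}(A)=\emptyset$---a fact closer in spirit to your injectivity step than to half-invariance itself.
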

\begin{proof}
We  assume that $\Lambda$ is f-d and  $\rO^+_A$ is clearly connected. Furthermore as $A$ is achronal 
and $\Lambda$ timelike  then $A\cap \Lambda_s(A)=\emptyset$ for any $s>0$, so  
\[
\Lambda_{s_1}(A)\cap \Lambda_{s_2}(A)=\emptyset \, , \ \  \ \ 0<s_1 <s_2 \, . 
\]   
So $\rO^+_A$ is half-invariant w.r.t.\ $\Lambda$. What remains to be shown is that $\rO^+_A$ is open. Given  $p\in\rO^+_A$, then $p\in\Lambda_{s_*}(A)\equiv A_{s_*}$ for some $s_*>0$. 
$A_{s_*}$ is achronal and $A_{s_*}\cap \mathrm{edge}(A_{s_*})=\emptyset$.  
The latter, in particular, implies that we can find a relatively compact 
subset $U$ s.t.\ $p\in U\subset \overline{U}\subset A_{s_*}$ and  
\[
U\cap \mathrm{edge}(U)=\emptyset \, .
\]
By \cite{BS06} there is an achronal Cauchy surface  $\cC$ containing the closure $\overline{U}$ of $U$.  Then  as $\Lambda$ is timelike and complete, the mapping 
\[
\bR\times \cC \ni (s,x)\mapsto \Lambda_s(x)\in \cM 
\]
is an homeomorphism \cite[Lemma 2]{BS03}: it is continuous and bijective because for any point $p\in\cM$ the timelike curve $\Lambda_t(p)$ meets $\cC$ exaclty once. The invariance of domain theorem implies that this mapping is a homeomorphism. 
 
As $U\cap \mathrm{edge}(U)=\emptyset$, we have that $U$ is open in $\cC$. It is enough to observe that 
the limit points of any sequence $x_n\in \cC\setminus U$ belong to $\cC\setminus U$ because otherwise \cite[Proposition 2.140]{Min2019} 
they would be edge points of $U$ (the sequence $x_n$ is in the complement of $I^+(U)\cup I^-(U)\cup U$ ). 

In conclusion, for $\eps>0$ and small enough, the set   
$\{ \Lambda_s(U) \ : \ s\in (s_*-\eps, s_*+\eps)\}$ is an open subset of $\rO^+_A$
containing $p$, and this proves that $\rO^+_A$ is open.
\end{proof}
Note that the causal complement of a strip is empty in the wedge as follows by applying the same reasoning 
of the proof  the Lemma \ref{equiv-wedge}. 

\subsubsection{Examples of half invariant regions}
\label{relspacetime:C}

\textbf{Minkowski spacetime.}\label{minkowski1}
Let $\cM = \mathbb R^{4}$ be the $4$-dimensional Minkowski spacetime with  
Poincar\'e symmetry group.
With $\Lambda$ the Killing flow  given  the pure Lorentz transformations in the $x_1$ direction,
a wedge with respect to $\Lambda$ is the usual right wedge $\cW_0 = \{x\in\mathbb R^{4}: x_1 > |x_0|\}$. 

 Half invariant subregions of $\cW_0$ are the lightlike translations 
$\cW_\lambda:=(\lambda,\lambda,0,0)+\cW_{0}$ with $\lambda>0$. Note that $\cW_\lambda$ is indeed a strip
generated by 
\[
 A_\lambda:=\{(x_0,x_1,y)\in \cM \ | \  x_0 = -x_1+\lambda \ , \ x_1>\lambda \ , \ y\in\bR^2\big\}  \, .  
\]
which is a connected achronal subset of $\cW_0$ with $A_{\lambda}\cap\mathrm{edge}(A_{\lambda})=\emptyset$. 

Other examples include the \emph{deformed wedges} of $\cW_0$ \cite{BFLW15}, also discussed in \cite{L20, MTW21}. Any non-negative smooth function $f(y)\ne 0$, where $y=(x_2,x_3)$, yields a deformation $\cW_f$ of the right-wedge $\cW_0$, defined as
\begin{equation}
\label{deformation}
\cW_f:=\big\{(x_0,x_1,y)\in\cM \ , \ |x_0-f(y)|< x_1-f(y) \big\} \, . 
\end{equation}
Although not a wedge, as it lacks causal convexity and $\Lambda$ invariance, $\cW_f$ is a half-invariant subregion of $\cW_0$. The corresponding lightlike translation $\cW_{f+\lambda} = (\lambda,\lambda,0,0)+\cW_{f}$, with $\lambda>0$, is also half-invariant. This is generated by
\begin{equation}
\label{strip-deformed}
A_{f+\lambda}:=\big\{(x_0,x_1,y)\in \cM \ | \ x_0 = -x_1+2(f(y)+\lambda) \ , \ x_1> f(y)+\lambda \ , \ y\in\bR^2\big\} \, .
\end{equation}
While $A_{f+\lambda}$ is a connected smooth surface, unlike $A_{\lambda}$, it is not achronal unless $f$ is constant. Consequently, $\cW_{f+\lambda}$ is not a strip in general.\medskip\noindent

{\bf Rindler spacetime.} 
\label{rindler}
The Rindler spacetime is a wedge itself: it may be considered as the above standard wedge of the Minkowski spacetime with hyperbolic coordinates \cite{CLR}. The Lorentz boosts of Minkowski correspond to time translation in Rindler spacetime giving a time-like Killing flow $\Lambda$ that becomes light-like on the boundary of the wedge (horizons). Strips are defined as in the Minkowski case.  

\medskip\noindent
{\bf  Kruskal spacetime.}\label{kruskal}
The Kruskal spacetime $\cM$ is a 4-dimensional globally hyperbolic spacetime  arising  as the maximal analytic extension of the \emph{Schwarzschild spacetime}. 
We stare here few  basic properties, for further details see e.g. the  references at the beginning of this section. 

In the Kruskal-Szekeres coordinates the metric is defined on $\{ 
(t,x) \in\bR^2 : x^2-t^2 <-1\}\times \bS^2$ as
\[
ds^2 =  \frac{32M^3}{r}\, e^{-\frac{r}{2M}}\, (-dt^2 + dx^2)+ r^2 d\Omega^2
\]
where $d\Omega^2$ the area element of the unit 2-sphere $\bS^2$, $M>0$ is the black hole mass, and  $r\in (0,+\infty)$ is the  Schwarzschild radius  implicitly related to $t,x$ by  
\[
x^2-t^2= e^{\frac{r}{2M}}\left(\frac{r}{2M}-1\right) \, . 
\]
The metric has a ``physical" singularity at $x^2-t^2=-1\iff r=0$.   

There are four Killing vector fields generating the symmetry group: 
the ones associated with the spatial rotation group $SO(3)$, and the time translation Killing flow that acts on the $(t,x)$ component by
\[
\Lambda_s =\begin{pmatrix}
	\cosh (s/4M) & \sinh(s/4M) \\
	\sinh(s/4M) & \cosh(s/4M) \\
\end{pmatrix} 
\]
and trivially on the $\bS^2$ component. $\Lambda$ is timelike in the right wedge  $\cW_+:=\{t^2-x^2>0 \ , x>0\}\times \bS^2$, corresponding to the external region of the Schwarzschild black hole, and in the left wedge $\cW_-:=\{t^2-x^2>0 \ , x<0\}\times\bS^2$. 

Just like in the case of the Minkowski spacetime the lightlike translations  $\cW_{+,\lambda} = \cW_+ + (\lambda, \lambda)$, with $\lambda > 0$ (translation on the $\bR^2$ component) turn out to be strips. $\cW_{+,\lambda}$ represents 
the half-orbit of the translated right-past horizon $h_{\lambda}:= \{x^2-t^2=0, \ x>0, \ t<0\}\times \bS^2+(\lambda,\lambda)$. 
Other half-invariant subregions of $\cW_+$ are the deformations defined as 
\[
\cW_{+,f} := \big\{(t,x,\Omega)\in\cM \, | \, |t-f(\Omega)|< x-f(\Omega) \big\} \ ,
\] 
for any non negative smooth function $f\ne 0$ on $\bS^2$, along with their corresponding $\lambda$-translated 
versions $\cW_{+,f}+(\lambda,\lambda)$ for any $\lambda>0$. The latter, much like in the Minkowski case, are not strips in general; they are generated by
\begin{equation}
\label{strip-deformed2}
h_{f+\lambda}:=\big\{(t,x,\Omega)\in \cM \, | \, t = - x +2(f(\Omega)+\lambda) , \,  x> f(\Omega)+\lambda \big\} \, ,
\end{equation}
which are not achronal subsets of $\cM$ unless $f$ is constant.

\medskip\noindent
{\bf  Further examples.}\label{further}
As is known, wedge regions of the $d$-dimensional de Sitter spacetime $dS^d$ are naturally defined by the embedding of $dS^d$ into in $\R^{1 + d}$; strips can be similarly defined.  
Concerning the Minkowski spacetime $\cM$ with time translation Killing flow $T$, then the forward light cone is a $T$ half-invariant subregion, indeed a strip of the wedge $\cM$; in the one-dimensional case, a positive half-line is a half invariant subregion of $\mathbb R$. KMS states in these contexts are studied respectively in \cite{RST,BY99} and \cite{CLTW1,CLTW2,L01}. 

We finally point out that the relative entropy concept in this paper is also applicable 
to spacetimes describing non-stationary black holes, as recently discussed in \cite{KPV21}.

\subsection{Entropy and  Klein--Gordon field on a globally hyperbolic spacetime}
\label{relspacetime:E} 
We now consider \emph{Weyl quantisation} of the Klein-Gordon free scalar field on a globally hyperbolic spacetime 
$\cM$. 

The Klein-Gordon operator is $-\Box +m^2$, where $\Box$ is the D'Alembertian associated with the spacetime metric tensor and $m$ is the mass. 
We consider the space of compactly supported, smooth functions $C^\infty_0(\cM,\bR)$ equipped with the symplectic form 
\[
\sigma(f_1, f_2):=\frac12\int_{\cM} f_1 E(f_2) \, d\nu  \, , \qquad f_1,f_2\in  C^\infty_0(M,\bR) \, ,
\]
where $E$ is the  advanced-minus-retarded propagator of  $-\Box +m^2$ and 
$\nu$ is the metric-induced volume measure on $\cM$.

The symplectic form $\si$ annihilates
on the image of the Klein-Gordon operator. So we consider 
the quotient $S(\cM):=C^\infty_0(\cM,\bR)/ [(-\Box + m^2)(C^\infty_0(\cM,\bR))]$ where $\si$  is non degenerate. For simplicity, 
in the following we omit the equivalence class symbol when considering elements of $S(\cM)$. 

The Weyl algebra of the Klein-Gordon field is the Weyl $\rC^*$-algebra associated with the symplectic space $(S(\cM),\sigma)$ as in Section \ref{QF1}. 

We consider a \emph{one-particle structure} $(\cH_\mu,\kappa_\mu)$, where $\mu$ is a real scalar product on $S(\cM)$ as in Section \ref{QF1}. 
For simplicity, we set
\[
W_\mu(f):= W_\mu\big(\kappa_\mu(f)\big) \, , \qquad f\in C^\infty_0(\cM)\, .
\]
and note that the Weyl unitaries satisfy 
the \emph{Klein-Gordon equation} 
\[
W_\mu\big((-\Box + m^2)f\big)=1 \,  , \qquad f\in C^\infty_0(\cM) \, . 
\]
Given a globally hyperbolic spacetime with a Killing flow $\Lambda$, let $\varphi_\mu$ be a quasi-free state,
defined by a one-particle structure $(\cH_\mu,\kappa_\mu)$  which is left invariant by $\Lambda$. 
For any region $\cO$ with nonempty interior of $\cM$, we consider the real subspace
\[
 H_\mu(\cO):= \big\{\kappa_\mu(f) \, , \, \text{$f$ compactly supported in $\cO$}\big\}^{-}\subset \cH_\mu
\]
and the von Neumann algebra
\[
\cR_\mu(\cO):=\{  W_\mu(f)  \, , \, \text{$f$ compactly supported in $\cO$}\}'' \, .
\] 
The correspondence $\cO\mapsto\cR_\mu(\cO)$ satisfies 
\begin{itemize}
\item  $\cO_1\subset \cO_2 \ \Rightarrow \ \cR_\mu(\cO_1)\subset \cR_\mu(\cO_2)$  \emph{(isotony)};
\item $\cO_1\perp \cO_2 \ \Rightarrow \ [\cR_\mu(\cO_1),\cR_\mu(\cO_2)]=0$, \emph{(causality)};
\item  $u_{\mu,T}\,  \cR_\mu(\cO) u_{\mu,T}^* =\cR_\mu(\tau\cO)$ if $\tau$ is an isometry of $\cM$ leaving $\mu$  invariant and $T$ is the associated symplectic transformation (see Sect. \ref{QF1})
\emph{(covariance)}; 
\end{itemize}
here $\cO_1\perp \cO_2$ means that the closure of $\cO_1$ and $\cO_2$ are causal disjoint. 
We note that this representation is  irreducible when the real subspace $\kappa_\mu(S(\cM))$ is dense in $\cH_\mu$. Other properties of the Weyl algebra of the Klein-Gordon field can be found in \cite{Ver1997}.

We now assume that $(\cM,\Lambda)$ is \emph{stationary}, i.e.\ $\Lambda$ is timelike complete. In this case the Weyl algebra of the Klein-Gordon field admits, for any $\beta>0$ and $m>0$, a unique quasi-free $\beta$-KMS Hadamard 
state $\cite{San13}$ with respect to the timelike Killing flow $\Lambda$. In the following, we  fix $\beta>0, m>0$; the one-particle structure is the one associated with  this quasi-free $\beta$-KMS state and we set $\mu = \beta$.
\begin{proposition}
\label{convexity:stationary}
Let $(\cM,\Lambda)$ be a stationary spacetime and let $\varphi_\beta$  
be a $\beta$-KMS quasi-free Hadamard state of  the Weyl algebra. Then, if  $\cV$ is any non-empty  half-invariant region of $\cM$,
the triple $\boldsymbol{\cH_\beta}:=\big(\cH_\beta,H_\beta(\cM), H_{\beta}(\cV)\big)$ 
is a half-sided modular inclusion;  
the  relative entropy  function 
\[
S^{\cR_\beta(\cV_\lambda)}(\varphi_\phi|\!| \varphi_\beta) = S^{\boldsymbol{\cH_{\beta}}}_{\phi}(\lambda) \ ,
\]
is convex with respect to $\lambda$. 
\end{proposition}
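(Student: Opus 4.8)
The plan is to deduce the proposition from the abstract convexity result, Theorem \ref{Sl}, once the geometric data $\boldsymbol{\cH_\beta}=(\cH_\beta, H_\beta(\cM), H_\beta(\cV))$ has been recognised as a genuine half-sided modular inclusion of standard subspaces. The argument then proceeds in three moves: establish standardness and the inclusion, verify the half-sided modular condition from half-invariance together with the KMS property, and finally translate the one-particle statement back into relative entropy via Proposition \ref{equiv-entropy}.

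First I would check that the subspaces are standard. Since $\varphi_\beta$ is a faithful KMS state, its GNS vector is cyclic and separating for $\cR_\beta(\cM)$, so at the one-particle level $H_\beta(\cM)$ is standard. The inclusion $H_\beta(\cV)\subset H_\beta(\cM)$ is isotony. For $H_\beta(\cV)$ the separating property $H_\beta(\cV)\cap i H_\beta(\cV)=\{0\}$ is automatic, being inherited from $H_\beta(\cV)\subset H_\beta(\cM)$ and standardness of the larger space; cyclicity $\overline{H_\beta(\cV)+ iH_\beta(\cV)}=\cH_\beta$ follows from the Reeh--Schlieder property, which holds for quasi-free Hadamard states. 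Thus $H_\beta(\cV)$ is standard even though the causal complement of a strip may be empty.

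The decisive step is the half-sided condition, which rests on the geometric identification of the modular group. Because $\varphi_\beta$ is $\beta$-KMS with respect to the timelike Killing flow $\Lambda$, the Tomita--Takesaki modular automorphism group of $(\cR_\beta(\cM),\varphi_\beta)$ coincides, after the standard rescaling, with $\Lambda$; passing to the one-particle space, $\Delta^{-is}_{H_\beta(\cM)}$ implements the one-particle lift of $\Lambda$ run in the positive direction for $s\ge 0$, the orientation of $\Lambda$ in the wedge being fixed so as to match the sign of the modular flow. Combining covariance of $\cO\mapsto H_\beta(\cO)$ under isometries with the half-invariance $\Lambda_t(\cV)\subset\cV$, $t>0$, gives, for the appropriate positive constant $c$,
\[
\Delta^{-is}_{H_\beta(\cM)}H_\beta(\cV)=H_\beta(\Lambda_{cs}\cV)\subset H_\beta(\cV)\, ,\qquad s\ge 0\, ,
\]
which is exactly the half-sided modular condition. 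Hence $\boldsymbol{\cH_\beta}$ is a half-sided modular inclusion, the Borchers--Wiesbrock translation group $U$ of Section \ref{Ab} is available, and the translated subspaces are $H_\lambda=U(\lambda)H_\beta(\cM)=H_\beta(\cV_\lambda)$.

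To finish, I identify $\cR_\beta(\cV_\lambda)=\cR_{\varphi_\beta}(H_\lambda)$ as the second-quantised algebra of the translated one-particle subspace, so that Proposition \ref{equiv-entropy} yields $S^{\cR_\beta(\cV_\lambda)}(\varphi_\phi |\!| \varphi_\beta)=S^{H_\lambda}_\phi=S^{\boldsymbol{\cH_\beta}}_\phi(\lambda)$ for every $\phi\in\cH_\beta$; Theorem \ref{Sl} then delivers convexity in $\lambda$ (the abelian part of the inclusion is pointwise fixed by $U$ by Proposition \ref{factorial}, so it only adds a constant). The main obstacle is the step above: pinning down that the KMS modular flow is precisely the one-particle Killing flow with the correct orientation, so that \emph{positive} half-invariance of $\cV$ corresponds to $s\ge 0$ in the half-sided condition. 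Everything else is routine once this geometric modular identity, together with Reeh--Schlieder standardness, is in place.
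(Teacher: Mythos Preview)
Your proposal is correct and follows essentially the same route as the paper: Reeh--Schlieder gives standardness of $H_\beta(\cV)$ (the paper cites \cite{Stro2000}), the KMS condition identifies the modular group of $H_\beta(\cM)$ with the one-particle Killing flow so that half-invariance of $\cV$ yields the half-sided modular condition, and then Proposition~\ref{equiv-entropy} together with Theorem~\ref{Sl} gives the entropy identity and convexity. Your emphasis on the orientation issue and the explicit decomposition of the standardness argument are sound elaborations, but the architecture is the same.
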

\begin{proof}
Let $u_{\beta,s}$ be the unitary on $\Gamma(\cH_\beta)$ associated with $\Lambda_s$. 
The $\beta$-KMS condition implies that 
\begin{equation}\label{*}
\mathrm{ad}\big(u_{\beta, s}\big) = \mathrm{ad}\big(\Delta^{-is}_{\cR_\beta(\cM)}\big) \, ,
\end{equation}
i.e. the one-parameter automorphism group of the von Neumann algebra $\cR_\beta(\cM):=\cA(\cM)''$ on the Fock space is the modular group  ad$\Delta^{-is}_{\cR_\beta(\cM)}$.

The von Neumann algebra associated with a non empty, open, relatively compact region satisfies the Reeh-Schlieder property \cite{Stro2000}. 
So, the corresponding real Hilbert space $H_\beta(\cO)$ is a standard subspace of $\cH_\beta$ and  
\begin{equation}\label{**}
\Delta_{\cR_\beta(\cO)} = \Gamma(\Delta_{H_\beta(\cO)}) \ \ , \ \ J_{\cR_\beta(\cO)} = \Gamma(J_{H_\beta(\cO)})   \ , \qquad \cO\subset \cM \, ,
\end{equation}
where $\Delta_{H_\beta(\cO)}$ and $J_{H_\beta(\cO)}$ is the modular operator of $H_\beta(\cO)$. Now,  since $\cV$ is half invariant,  \eqref{*} and \eqref{**} give
\[
\Delta^{-is}_{H_{\beta,0}}H_{\beta}(\cV)=  H_\beta(\Lambda_s(\cV))\subset H_{\beta}(\cV) \, ,\qquad s> 0\, ,
\]
that implies the triple $\boldsymbol{\cH_\beta}:=\big(\cH_\beta,H_\beta(\cM), H_{\beta}(\cV)\big)$ to be a half-sided modular inclusion. Hence, by Proposition \ref{equiv-entropy}, we have that 
\[
S^{\cR_\beta(\cV_\lambda)}(\varphi_\phi|\!| \varphi_\beta)=S^{\boldsymbol{\cH_{\beta}}}_{\phi}(\lambda)  \ .
\]
The convexity then follows by Theorem \ref{Sl}. 
\end{proof}

The above results can be generalized in the context of a globally hyperbolic spacetime  $\cM$ with a wedge $\cW$ with respect to the Killing flow $\Lambda$.     
Being $(\cW,\Lambda)$ globally hyperbolic and stationary,
the corresponding Weyl algebra has  $\beta$-KMS state $\varphi_\beta$. 
If $\varphi_\beta$ admits an extension to a quasi-free $\Lambda$-invariant state $\varphi_o$ 
(the ``vacuum state") of the Weyl algebra of  
$\cM$ such that  the vacuum vector $\xi_o$ is cyclic for the algebra $\cR_o(\cW)$,
then we may consider the unitary $U_{o,\beta}$ defined by
\begin{equation}
\label{extension}
U_{o,\beta}  W_\beta(f)\xi_\beta := W_o(f)\xi_o \ , \qquad f\in C^\infty_0(\cW) \, ;
\end{equation}
then the results of Propositions \ref{convexity:stationary} easily extend  
to the vacuum representation of the Weyl algebra of $\cM$. 
\begin{corollary}
\label{convexity:general}
Let $\cM$ be a globally hyperbolic spacetime with a wedge $\cW$
with respect a Killing flow $\Lambda$. Let $\varphi_\beta$  be a KMS quasi-free state of the Weyl algebra of $\cW$ which admits an extension to a quasi-free state $\varphi_o$ of the Weyl algebra of $\cM$ satisfying \eqref{extension}.
Then, for any  half-invariant region $\cV$ of $\cW$ the triple $\boldsymbol{\cH_o}:=(\cH_o,H_o(\cW), H_{o}(\cV))$ 
is a half-sided modular inclusion.
The  relative entropy  function $S^{\boldsymbol{\cH_o}}_{\phi}(\lambda) = S^{\cR_o(\cV_\lambda)}(\varphi_\phi|\!| \varphi_o)$ is convex with respect to $\lambda$.
\end{corollary}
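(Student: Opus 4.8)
The strategy is to reduce everything to the stationary case of Proposition \ref{convexity:stationary} by transporting the whole structure through the intertwining unitary $U_{o,\beta}$ of \eqref{extension}. The key observation is that the wedge $\cW$, equipped with the induced metric and with the Killing flow $\Lambda$ that is timelike and complete on it, is itself a globally hyperbolic stationary spacetime. Hence the argument of Proposition \ref{convexity:stationary}, applied to $\cW$ viewed as a spacetime in its own right, already provides the half-sided modular inclusion $\boldsymbol{\cH_\beta}=(\cH_\beta, H_\beta(\cW), H_\beta(\cV))$ together with the convexity of $\lambda\mapsto S^{\boldsymbol{\cH_\beta}}_\psi(\lambda)$. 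It therefore suffices to show that $\boldsymbol{\cH_o}$ is unitarily equivalent, as a half-sided modular inclusion, to $\boldsymbol{\cH_\beta}$.

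First I would analyse $U_{o,\beta}$ at the one-particle level. Since $\varphi_o$ restricts to $\varphi_\beta$ on the Weyl algebra of $\cW$, the two quasi-free states agree on $\cA(\cW)$; consequently the assignment $\kappa_\beta(f)\mapsto\kappa_o(f)$, $f\in C^\infty_0(\cW)$, preserves the one-particle inner products and extends to an isometry $u_{o,\beta}:\cH_\beta\to\cH_o$ whose second quantisation $\Gamma(u_{o,\beta})$ is exactly $U_{o,\beta}$. The cyclicity of $\xi_o$ for $\cR_o(\cW)$ guarantees that $\overline{\kappa_o(S(\cW))}=\cH_o$, so $u_{o,\beta}$ is surjective, i.e.\ a unitary. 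By construction $u_{o,\beta}H_\beta(\cO)=H_o(\cO)$ for every region $\cO\subset\cW$; in particular $u_{o,\beta}$ maps $H_\beta(\cW)$ onto $H_o(\cW)$ and $H_\beta(\cV)$ onto $H_o(\cV)$.

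Because the modular objects of a standard subspace are determined by the subspace and transform covariantly under unitaries, $u_{o,\beta}$ intertwines the modular groups and conjugations of $H_\beta(\cW), H_\beta(\cV)$ with those of $H_o(\cW), H_o(\cV)$. Thus $u_{o,\beta}$ is a unitary intertwiner of half-sided modular inclusions in the sense discussed above, and in particular it intertwines the two associated translation unitary groups, so that $u_{o,\beta}H_\beta(\cV_\lambda)=H_o(\cV_\lambda)$ for every $\lambda$. Since the half-sided modular inclusion property is manifestly stable under unitary equivalence, $\boldsymbol{\cH_o}=(\cH_o,H_o(\cW),H_o(\cV))$ is a half-sided modular inclusion. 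The convexity of $\lambda\mapsto S^{\boldsymbol{\cH_o}}_\phi(\lambda)$ then follows directly from Theorem \ref{Sl}, or equivalently by transporting the convexity of $S^{\boldsymbol{\cH_\beta}}$ through Lemma \ref{DS}(i); the identification $S^{\boldsymbol{\cH_o}}_\phi(\lambda)=S^{\cR_o(\cV_\lambda)}(\varphi_\phi|\!|\varphi_o)$ is Proposition \ref{equiv-entropy} applied to the standard subspace $H_o(\cV_\lambda)$.

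The main obstacle lies in the second paragraph: one must verify carefully that $U_{o,\beta}$ is genuinely the second quantisation of a one-particle \emph{unitary} carrying each geometric subspace $H_\beta(\cO)$ onto $H_o(\cO)$. This rests on two ingredients, namely the agreement $\varphi_o|_{\cA(\cW)}=\varphi_\beta$, which matches the one-particle data on $\cW$-supported test functions, and the cyclicity of $\xi_o$, which upgrades the natural isometry to a surjective unitary. Once this identification is in place, the remaining steps are a formal transport of the structure already established for the KMS representation of $\cW$.
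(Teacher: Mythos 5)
Your proposal is correct and takes essentially the same route as the paper, which states the corollary without a formal proof and simply asserts that the results of Proposition \ref{convexity:stationary} ``easily extend'' to the vacuum representation via the unitary $U_{o,\beta}$ of \eqref{extension}. Your second paragraph merely fills in the implicit step, namely that $U_{o,\beta}$ is the second quantisation of a one-particle unitary carrying each $H_\beta(\cO)$ onto $H_o(\cO)$, so that $\boldsymbol{\cH_o}$ is unitarily equivalent to the half-sided modular inclusion already obtained for the stationary spacetime $(\cW,\Lambda)$.
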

We point out that the extension condition \eqref{extension} is verified in meaningful examples. 
In the Minkowski spacetime, it holds because of the Bisognano-Wichmann Theorem \cite{BW}.  In the de Sitter spacetime, the de Sitter vacuum \cite{GH77, BB99} is $\beta_{GH}$--KMS for every wedge $\cW$ with respect to $\Lambda_\cW$, where $\beta_{GH}$ is the Gibbons-Hawking inverse temperature. 

A further relevant example is provided by the Kruskal spacetime; the analogy of the Bisognano-Wichmann setting of QFT for wedge regions in Minkowski spacetime and the for outer Schwarzschild region of Kruskal spacetime was noticed in \cite{Sew82}.
The Hartle-Hawking-Israel state satisfies the Bisognano-Wichmann property, this result is essentially contained in  \cite{San15}, using the analysis in \cite{Kay1985}, as shown in the following.   
\begin{lemma}
Let $\varphi_o$ be the Hartle-Hawking-Israel state of the scalar Klein-Gordon field on the Kruskal spacetime. Then

$(i)$ the restriction of $\varphi_o$ to $\cA(\cW_+)$ is $\beta_H$-KMS with respect to $\Lambda$, with $\beta_H$ the Hawking inverse temperature; 

$(ii)$ $\varphi_o$ satisfies the Reeh Schlieder cyclicity property with respect to open, non-empty, relatively compact subregions in the double wedge $\cW_- \cup \cW_+$;

$(iii)$ $\varphi_o$ satisfies the Bisognano-Wichmann property 
\[
 \mathrm{ad}(J_{\cR_o(\cW_+)}) \cR_o(\cO)= \cR_o(\iota\cO) \, ,  \ \mathrm{ad}\Big(\Delta^{-is}_{\cR_o(\cW_+)}\Big)\cR_o(\cO) = \cR_o(\Lambda_{2\pi s}\cO) \, , 
 \quad \cO\subset \cW_+\, ,
\] 
where $\iota$ is the reflection symmetry of $\cM$ on the $\mathbb R^2$ component: $\iota(t,x)=(-t,-x)$. 
\end{lemma}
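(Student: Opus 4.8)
The plan is to deduce the three properties from the known construction of the Hartle--Hawking--Israel (HHI) state in \cite{Kay1985, San15} together with the second-quantisation dictionary \eqref{**}, which reduces statements about the von Neumann algebras $\cR_o(\cO)$ to statements about the one-particle standard subspaces $H_o(\cO)\subset\cH_o$. Recall that $\varphi_o$ is a quasi-free Hadamard state on the whole Kruskal manifold, invariant under the Killing flow $\Lambda$ and under the reflection $\iota$. For $(i)$ the point is that the HHI state is characterised by the fact that its restriction to $\cA(\cW_+)$ is thermal: the $\beta_H$-periodicity in imaginary Killing time of its two-point function is exactly the $\beta_H$-KMS condition with respect to $\Lambda$, and this is established in \cite{Kay1985, San15}. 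For $(ii)$, since $\varphi_o$ is Hadamard and $\Lambda$-invariant, the Reeh--Schlieder property for open, non-empty, relatively compact $\cO$ follows from the general analysis of \cite{Stro2000} (the same input used in the proof of Proposition \ref{convexity:stationary}); in particular each $H_o(\cO)$ is standard and \eqref{**} applies.

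For the modular-flow half of $(iii)$ I would argue as follows. By $(i)$ the restriction of $\varphi_o$ to $\cR_o(\cW_+)$ is $\beta_H$-KMS with respect to the automorphism group implementing $\Lambda$, so by the uniqueness of the modular group in Tomita--Takesaki theory $\mathrm{ad}\big(\Delta^{-is}_{\cR_o(\cW_+)}\big)$ must coincide with the Killing flow reparametrised by $\beta_H$. Combining this with the covariance of $\cO\mapsto\cR_o(\cO)$ and normalising the Killing parameter to the surface gravity yields $\mathrm{ad}\big(\Delta^{-is}_{\cR_o(\cW_+)}\big)\cR_o(\cO)=\cR_o(\Lambda_{2\pi s}\cO)$, $\cO\subset\cW_+$.

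The main obstacle is the modular-conjugation half, i.e.\ identifying $J_{\cR_o(\cW_+)}$ with the geometric reflection. Since $\varphi_o$ is $\iota$-invariant and $\iota$ reverses the time orientation, $\iota$ is implemented on $\cH_o$ by an antiunitary involution $V_\iota$ with $V_\iota\xi_o=\xi_o$ and $V_\iota H_o(\cO)=H_o(\iota\cO)$; by \eqref{**} it then suffices to prove $J_{H_o(\cW_+)}=V_\iota$. I would verify this from the polar decomposition $S_{H_o(\cW_+)}=J_{H_o(\cW_+)}\Delta^{1/2}_{H_o(\cW_+)}$ by checking that $V_\iota\Delta^{1/2}_{H_o(\cW_+)}$ restricts to the identity on $H_o(\cW_+)$. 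Concretely, for $h=\kappa_o(f)$ with $\mathrm{supp}\,f\subset\cW_+$ one has, by the modular-flow identification just obtained, that $\Delta^{1/2}_{H_o(\cW_+)}h$ is the analytic continuation of $h$ to imaginary Killing time $\beta_H/2$; the crux is that this half-period Euclidean evolution maps the wedge data onto $\cW_-$ and is undone by the reflection, so that $V_\iota\Delta^{1/2}_{H_o(\cW_+)}h=h$. This is precisely the analytic-continuation mechanism of the Bisognano--Wichmann theorem, and for Kruskal it is supplied by the explicit double-KMS (thermofield-double) form of the one-particle structure in \cite{Kay1985}, recast in the present algebraic framework in \cite{San15}. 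It gives $S_{H_o(\cW_+)}=V_\iota\Delta^{1/2}_{H_o(\cW_+)}$, hence $J_{H_o(\cW_+)}=V_\iota$, and therefore $\mathrm{ad}(J_{\cR_o(\cW_+)})\cR_o(\cO)=\cR_o(\iota\cO)$; wedge duality $\cR_o(\cW_-)=\cR_o(\cW_+)'$ follows as a by-product. The delicate step --- and the one that genuinely requires the detailed HHI analysis rather than abstract modular theory --- is controlling this imaginary-time continuation and its interplay with $\iota$ near the bifurcation surface.
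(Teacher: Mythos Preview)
Your outline for $(i)$ and for the modular-flow half of $(iii)$ is fine and matches the paper: both are direct consequences of the characterisation of the HHI state in \cite{San15} and the uniqueness of the modular automorphism group for a KMS state. Your sketch for the modular conjugation in $(iii)$ is also essentially correct --- it is a description of Kay's double-KMS argument \cite{Kay1985}, which the paper simply cites rather than reproves.

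The genuine gap is in your argument for $(ii)$. You invoke \cite{Stro2000} ``since $\varphi_o$ is Hadamard and $\Lambda$-invariant'', but Strohmaier's Reeh--Schlieder theorem is for \emph{stationary} spacetimes, i.e.\ it requires the Killing field to be everywhere timelike. The Kruskal Killing flow $\Lambda$ is spacelike in the black-hole and white-hole regions and null on the bifurcate horizon, so \cite{Stro2000} does not apply to $\varphi_o$ on $\cM$. If instead you try to apply it to the stationary wedge $(\cW_+,\Lambda)$, you only obtain Reeh--Schlieder in the GNS representation of the \emph{restricted} state $\varphi_{\beta_H}$ on $\cA(\cW_+)$; this lives on a different Hilbert space from $\Gamma(\cH_o)$, and to transfer cyclicity you would need precisely the unitary \eqref{extension}, whose existence already presupposes cyclicity of $\xi_o$ for $\cR_o(\cW_+)$. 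So the argument is circular. Note also that your derivation of $(iii)$ tacitly uses that $\xi_o$ is cyclic and separating for $\cR_o(\cW_+)$, so the gap in $(ii)$ propagates.

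The paper closes this gap differently: it uses Sanders' result \cite[Proposition~5.4]{San15} that the HHI representation restricted to $\cA(\cW_-\cup\cW_+)$ is unitarily equivalent, via $W_o(f)\xi_o\mapsto \tilde W_{\beta_H}(f)\tilde\xi_{\beta_H}$, to Kay's double-KMS representation; Reeh--Schlieder for the latter is established in \cite[Theorem~3.5]{San15}, and the unitary transports it to $\varphi_o$. The Bisognano--Wichmann property is then also transferred through this same unitary equivalence from Kay's result for the double-KMS state, rather than re-derived from scratch as in your sketch.
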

\begin{proof}
Sanders has shown in \cite[Theorem 5.3]{San15} that the HHI state $\varphi_o$ is a pure quasi-free Hadamard state on the Kruskal spacetime, invariant under the Killing flow $\Lambda$ and the reflection symmetry 
$\iota$, for the latter see comments after \cite[Proposition 5.4]{San15}. 
More precisely,  $\varphi_o$ is characterized by the fact that 
its restriction to the algebra of  the right wedge $\cW_+$ is a KMS state $\varphi_{\beta_H}$  with respect to $\Lambda$, where $\beta_H$ is the Hawking inverse temperature. 
Furthermore, the restriction of $\varphi_o$  to the algebra of the double wedge $W_-\cup W_+$ coincides with the double-KMS state $\tilde{\varphi}_{\beta_H}$ that is
associated with $\varphi_{\beta_H}$ accordingly to \cite{Kay1985}. 

Moreover,  Kay showed that the double KMS-state $\tilde{\varphi}_{\beta_H}$ satisfies the Bisognano-Wichmann property on $W_-\cup W_+$  \cite{Kay1985}; 
actually this property is transferred to the HHI state $\varphi_o$ because, by \cite[Proposition 5.4]{San15}, the mapping 
\[
W_o(f)\xi_o\mapsto \tilde{W}_{\beta_H}(f)\tilde{\xi}_{\beta_H} \, , \qquad f\in C^\infty_0(\cW_-\cup\cW_+) \ ,
\]
gives a unitary operator between the Fock space of $\varphi_o$ and that of the double KMS state $\tilde{\varphi}_{\beta_H}$.
By this unitary equivalence and the fact that $\tilde{\varphi}_{\beta_H}$ satisfies the Reeh-Schlieder property  \cite[Theorem 3.5]{San15}, it follows that 
$\varphi_o$ satisfies the Reeh-Schlieder property too.  
\end{proof}

\smallskip

\noindent
{\bf Acknowledgements.} 
We are grateful to M. S\'anchez for fruitful discussions and an explanation about the proof of Lemma 
\ref{equiv-wedge}. We are also grateful to K. Sanders for fruitful comments and for 
having pointed out an imprecision in Section \ref{relspacetime:C}.

\smallskip\noindent
We acknowledge the MIUR Excellence Department Project awarded to the Department of Mathematics, University of Rome Tor Vergata, CUP E83C18000100006.

\end{document}